\pgfplotsset{compat=newest}
\definecolor{pastelblue}{RGB}{76,113,175}
\definecolor{pastelgreen}{RGB}{84,167,104}
\definecolor{pastelred}{RGB}{196,78,82}
\definecolor{pastelgrey}{RGB}{230,230,230}
\definecolor{pastelbeige}{RGB}{243,236,221}
\definecolor{pastelpurple}{RGB}{154,139,192}
\pgfplotsset{
  carlos/.style={%
    %axis background/.style={fill=pastelgrey},%
    axis background/.style={fill=white},%
    grid=both,%
    grid style=white
  }%
}
\pgfplotsset{
  every axis/.style=carlos%
}
\newtheorem{theorem}{Theorem}
\newtheorem{definition}{Definition}
\newcommand{\spara}[1]{\noindent\textbf{#1}}
\newcommand{\ezip}{ODAG} % this was lossy tree
\newcommand{\ezips}{ODAGs}
\newcommand{\Ezip}{ODAG} % this was Lossy tree
\newcommand{\Ezips}{ODAGs}
\newcommand{\EZip}{ODAG} % this was Lossy Tree
\newcommand{\EZips}{ODAGs}
\newcommand{\model}{filter-process}
\newcommand{\MODEL}{Filter-Process}
\newcommand{\name}{Arabesque}
\titleformat*{\section}{\large\bfseries}
\titleformat*{\subsection}{\normalsize\bfseries}
\title{\textbf{\LARGE \name: A System for Distributed Graph Mining\\ Extended version}}
\author{
Carlos H. C. Teixeira\thanks{
Currently a PhD student at the Federal University of Minas Gerais, Brazil. This work was done while the author was at QCRI.}, Alexandre J. Fonseca, Marco
Serafini,\\ Georgos Siganos, Mohammed J. Zaki, Ashraf Aboulnaga\\
\em Qatar Computing Research Institute - HBKU, Qatar}
\date{}
\begin{document}
\includepdf{frontpage}
\includepdf{frontpage-2}
\clearpage
\setcounter{page}{1}
\maketitle

\begin{abstract}
Distributed data processing platforms such as MapReduce and Pregel have
substantially simplified the design and deployment of certain classes of
distributed graph analytics algorithms. However, these platforms do not
represent a good match for distributed graph mining problems, as for example
finding frequent subgraphs in a graph.
Given an input graph, these problems require exploring a very large number of subgraphs and finding
patterns that match some ``interestingness'' criteria desired by the
user. These algorithms are very important for areas such as social
networks, semantic web, and bioinformatics.

In this paper, we present \name, the first distributed data processing
platform for implementing graph mining algorithms. 
\name\ automates the process of exploring a very large number of subgraphs.
It defines a high-level {\em \model} computational model that simplifies the development of scalable graph mining algorithms: \name\ explores subgraphs and passes them to the application, which must simply compute outputs and decide whether the subgraph should be further extended.
We use \name's API to produce distributed solutions to three fundamental graph mining problems: frequent subgraph mining, counting motifs, and finding cliques. 
Our implementations require a handful of lines of code, scale to trillions of subgraphs, and represent in some cases the first available distributed solutions.
%The computational model naturally allows a scalable system
%implementation that avoids hotspots.
% It also enables applications to define scheduling techniques to deal with the explosion of the intermediate state, which is common in graph mining.
\end{abstract}

%!TEX root = main.tex
\section{Introduction}

Graph data is ubiquitous in many fields, from the Web to advertising and
biology, and the analysis of graphs is becoming increasingly important. The
development of algorithms for graph analytics has spawned a large amount
of research, especially in recent years. However, graph analytics has
traditionally been a challenging problem tackled by expert researchers,
who can either design new specialized algorithms for the problem at
hand, or pick an appropriate and sound solution from a very vast
literature. When the input graph or the intermediate state or computation complexity
becomes very large, scalability is an additional challenge.

The development of graph processing systems such as Pregel~\cite{pregel}
has changed this scenario and made it simpler to design scalable graph
analytics algorithms. Pregel offers a simple ``think like a vertex"
(TLV) programming paradigm, where each vertex of the input graph is a
processing element holding local state and communicating with its
neighbors in the graph. TLV is a perfect match for problems that can be
represented through linear algebra, where the graph is modeled as an
adjacency matrix (or some other variant like the Laplacian matrix) and
the current state of each vertex is represented as a vector. 
We call this class of methods {\em graph computation} problems.
A good example is computing PageRank~\cite{pagerank}, which is based
on iterative sparse matrix and vector multiplication operations. 
TLV covers several other 
algorithms that require a similar computational architecture, for
example, shortest path algorithms, and
over the years many optimizations of this paradigm have been
proposed~\cite{tlv-survey,powergraph,giraph++,blogel}. 

\begin{figure}[!h]
\centering
\begin{tikzpicture}
    \pgfplotstableread[col sep=comma]{num-embeddings-motif-mico.data}\numembeddingsmico
    \pgfplotstableread[col sep=comma]{num-embeddings-motif-youtube.data}\numembeddingsyoutube
    \pgfplotstableread[col sep=comma]{num-embeddings-gspan-citeseer.data}\numembeddingsgspanciteseer
    \pgfplotstableread[col sep=comma]{num-embeddings-cliques-mico.data}\numembeddingscliquesmico
    \pgfplotstableread[col sep=comma]{num-embeddings-motif-sn.data}\numembeddingsmotifsn
    %\pgfplotstableread[col sep=comma]{results/num-embeddings-cliques-youtube.data}\numembeddingscliquesyoutube
    \begin{semilogyaxis}[%
	width=\linewidth,
	height=1.75in,
	legend style={at={(0.5,1.05)}, anchor=south, legend columns=3, font=\scriptsize},
	bar width=1.5em,
	%max space between ticks=10pt,
	ytick={1000, 10000, 100000, 1000000, 10000000, 100000000, 1000000000, 10000000000, 0100000000000, 1000000000000, 10000000000000},
	yticklabels={$10^3$, , ,$10^6$, , ,$10^9$, , ,$10^{12}$, , ,},
	ylabel=Number of Interesting Subgraphs,
	xlabel=Size of the Subgraphs,
	x label style={font=\scriptsize},
	y label style={font=\scriptsize},
	%symbolic x coords={SSSP, TC, PR, SC},
	%xticklabels from table={\numembeddings}{superstep},
	y tick label style={align=center,font=\scriptsize},
	x tick label style={align=center,font=\scriptsize,text width=4em},
	%nodes near coords,
	%point meta=rawy,
	every node near coord/.append style={fill=white, fill opacity=0.7, anchor=south,font=\tiny, yshift=-0.15em},
	mark options={solid}
	%enlarge x limits=0.2,
	%xtick=data%
	]
    \addplot[color=pastelblue, mark=*]
	table[x=superstep, y=numembeddings]{\numembeddingsmico}; 
    \addplot[color=pastelred, mark=square*]
	table[x=superstep, y=numembeddings]{\numembeddingsyoutube}; 
    \addplot[color=black, mark=triangle*]
	table[x=superstep, y=numembeddings]{\numembeddingscliquesmico}; 
    %\addplot[color=pastelpurple, mark=triangle*]
	%table[x=superstep, y=numembeddings]{\numembeddingscliquesyoutube}; 
    \addplot[color=pastelgreen, mark=diamond*]
	table[x=superstep, y=numembeddings]{\numembeddingsgspanciteseer}; 
	\addplot[color=pastelgreen, mark=*, dashed]
	table[x=superstep, y=numembeddings]{\numembeddingsmotifsn}; 
    \legend{Motifs (MiCo), Motifs (Youtube), Cliques (MiCo), FSM (CiteSeer), Motifs(SN)}
%    \legend{Motives Mico, Motives Youtube, FSM CiteSeer, Cliques Mico}
    \end{semilogyaxis}
\end{tikzpicture}
    \caption{Exponential growth of the intermediate state in graph
        mining problems (motifs counting, clique finding, FSM: Frequent
        subgraph mining) on different datasets.}% (MiCo, Youtube, CiteSeer).
    %The notion of an ``interesting" subgraph depends on the specific graph mining problem.}%
    \label{fig:exp-growth}
\end{figure}

Despite this progress, there remains an
important class of algorithms that cannot be readily formulated using
the TLV paradigm. These are {\em graph mining} algorithms used to
discover relevant patterns that comprise both structure-based and
label-based properties of the graph. Graph mining is widely used for
several applications, for example, discovering 3D motifs in protein
structures or chemical compounds, extracting network motifs or
significant subgraphs from protein-protein or gene interaction networks,
mining attributed patterns over semantic data (e.g., in Resource
Description Framework or RDF format), finding structure-content
relationships in social media data, dense subgraph mining for community
and link spam detection in web data, among others. Graph mining algorithms
typically take a labeled and immutable graph as input, and mine
%previously unknown 
patterns that have some algorithm-specific property (e.g., frequency
above some threshold)
by finding all instances of these
patterns in the input graph. 
Some algorithms also compute aggregated
metrics based on these subgraphs.

Designing graph mining algorithms is a challenging and active area of
research. In particular, scaling graph mining algorithms to even moderately large
graphs is hard. The set of possible patterns and their subgraphs
in a graph can be exponential in the
size of the original graph, resulting in an explosion of the
computation and intermediate state.
Figure~\ref{fig:exp-growth} shows the exponential growth of the number of 
``interesting" subgraphs of different sizes in some of the graph mining problems
and datasets we will evaluate in this paper.
% \todo{We should have 2 datasets and three applications in this Figure.}
Even graphs with few thousands of edges can quickly generate
hundreds of millions of interesting subgraphs.
The need for enumerating a large number of subgraphs
characterizes graph mining problems and distinguishes them from graph
computation problems.
Despite this state explosion problem, most graph mining
algorithms are centralized because of the complexity of distributed solutions.
%This explains the development of expensive scale-up solutions
%such as Cray's Urika Graph Discovery
%system\footnote{http://www.cray.com/products/analytics/urika-gd},
%a large parallel shared memory (e.g., 512TB) framework.

In this paper, we propose automatic {\em subgraph exploration} as a generic building
block for solving graph mining problems, and introduce \name, the first
embedding exploration system specifically designed for distributed graph mining.
Conceptually, we move from TLV to ``think like an {\em embedding}''
(TLE), where by embedding we denote a subgraph representing a particular
instance of a more general template subgraph called a {\em pattern} (see Figure~\ref{fig:problem}).

\begin{figure}[t] 
    \centering
    \includegraphics[scale=0.45]{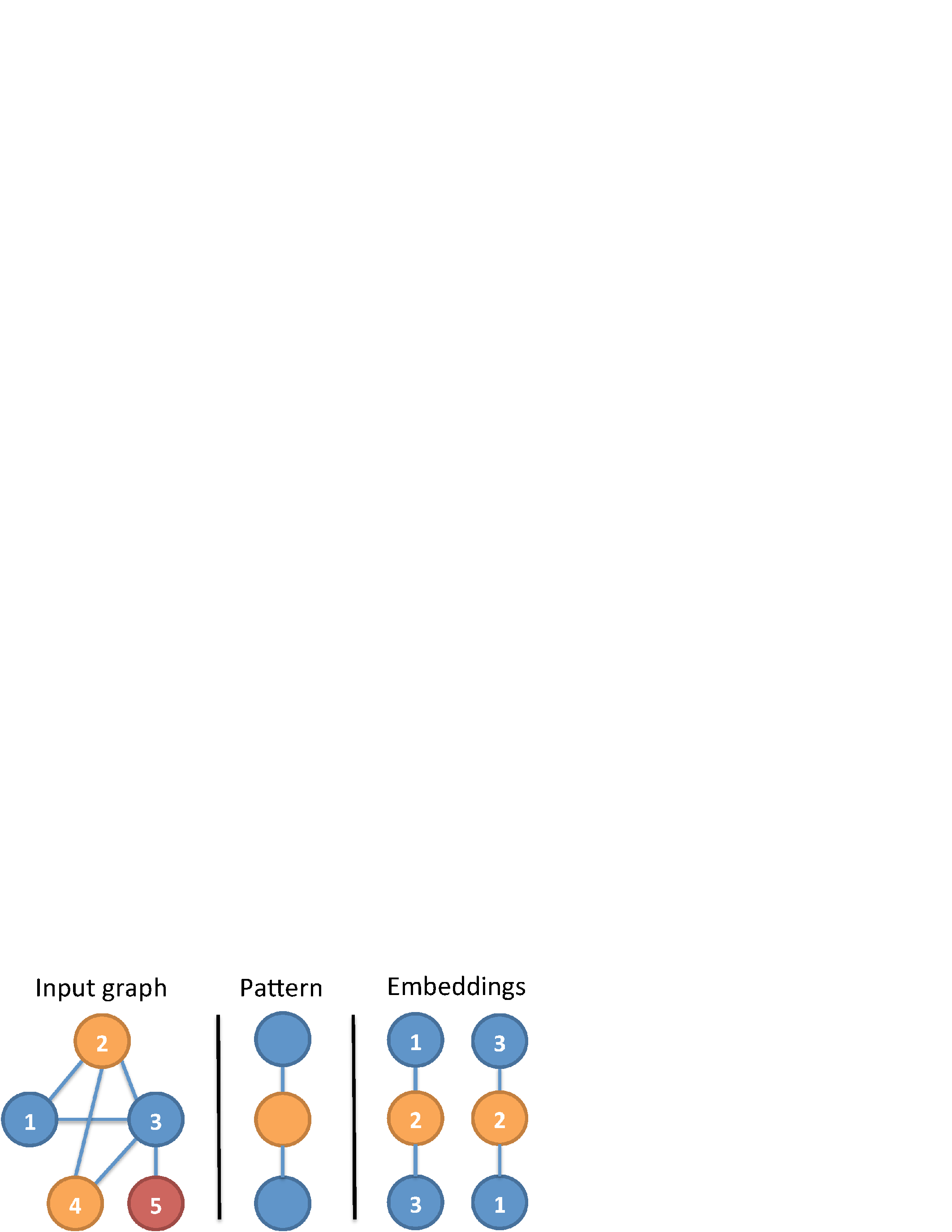}
    \caption{Graph mining concepts: an input graph, an example pattern,
    and the embeddings of the pattern. Colors represent labels. Numbers
denote vertex ids. Patterns and embeddings are two types of subgraphs.
However, a pattern is a {\em template}, whereas an embedding is an {\em
instance}. In this example, the two embeddings are {\em automorphic}.}
    \label{fig:problem}
\end{figure}

\name\ defines a high-level {\em \model} computational model.
Given an input graph, the system takes care of automatically
and systematically visiting all the embeddings that need to be explored
by the user-defined algorithm, performing this exploration in a
distributed manner. 
The system passes all the embeddings it explores to the application, which consists primarily of two functions:
% \todo{three functions? what about expand; why not mentioned here?}
{\em filter}, which indicates whether an embedding should be processed, and
{\em process}, which examines an embedding and may produce some output.
For example, in the case of finding cliques the filter function prunes embeddings that are not cliques, since none of their extensions can be cliques, and the process function outputs all explored embeddings, which are cliques by construction.
\name\ also supports the pruning of the exploration space based on user-defined metrics aggregated across multiple embeddings. 

The \name\ API simplifies and thus democratizes the design of graph mining algorithms, 
and automates their execution in a distributed setting. 
%The
%embedding-based API of \name\ was designed to be a common
%denominator among a large class of graph mining problems. In order to
%achieve generality, we deliberately ignore many problem-specific
%optimizations that apply only to specific graph mining algorithms.
%What \name\ provides is the ease of use for rapid development of graph algorithms and a much
%more efficient scale-out solution in a distributed setting.
We used \name\ to implement and
evaluate scalable solutions to three fundamental and diverse graph
mining problems: frequent subgraph mining,
counting motifs, and finding cliques.
These problems are defined precisely in Section~\ref{sec:graphmining}.
Some of these algorithms are the first
distributed solutions available in the literature, which shows the
simplicity and generality of \name.

\name's embedding-centered API facilitates a highly scalable implementation. The
system scales by spreading embeddings uniformly across workers, thus
avoiding hotspots. By making it explicit that embeddings are the
fundamental unit of exploration, \name\ is able to use fast
coordination-free techniques, based on the notion of embedding canonicality,
to avoid redundant work and minimize communication costs.
It also enables us
to store embeddings efficiently using a new data structure called
Overapproximating Directed Acyclic Graph (\ezip), and 
to devise a new two-level optimization for pattern-based aggregation, 
which is a common operation in graph mining algorithms.

\name\ is implemented as a layer on top of Apache Giraph~\cite{giraph}, a Pregel-inspired 
graph computation system, % supported mainly by Facebook, 
thus allowing both graph computation and 
graph mining algorithms to run on top of the same infrastructure.
The implementation does not use a TLV approach:
it considers Giraph just as a regular data parallel system
implementing the Bulk Synchronous Processing model.

To summarize, we make the following contributions:
\begin{itemize}
\item
We propose embedding exploration, or ``think like an embedding'', as an effective
basic building block for graph mining.
We introduce the \model\ computational model (Section~\ref{sec:model}),
design an API that enables embedding exploration to be expressed effectively and succinctly,
and present three example graph mining applications that can be
elegantly expressed using the \name\ API (Section~\ref{sec:API}).

\item
We introduce techniques to make distributed embedding exploration scalable:
coordination-free work sharing, efficient  storage of embeddings,
and an important optimization for pattern-based aggregation (Section~\ref{sec:system}).

\item
We demonstrate the scalability of \name\ on various graphs.
We show that \name\ scales to hundreds of cores over a cluster, obtaining
orders of magnitude reduction of running time over the
centralized baselines (Section~\ref{sec:eval}),
and can analyze trillions of embeddings on large graphs.
\end{itemize}

The Arabesque system, together with all applications used for this paper, is publicly available at the project's website: {\small \url{www.arabesque.io}}.

%!TEX root = main.tex
\section{Graph Mining Problems}
\label{sec:graphmining}

In this section, we introduce the graph-theoretic terms we will use
throughout the text and characterize the space of graph mining problems
addressed by \name.

\spara{Terminology:} Graph mining problems take an input graph $G$ where
vertices and edges are labeled. Vertices and edges have unique ids, and 
their labels are arbitrary, domain-specific
attributes that can be null. An {\em embedding} is a subgraph of $G$,
i.e., a graph containing a subset of the vertices and edges of $G$. 
A {\em vertex-induced} embedding is defined starting from a set of vertices
by including all edges of $G$ whose endpoints are in the set.
An {\em edge-induced} embedding is defined starting from a set of edges
by including all the endpoints of the edges in the set. For example,
the two embeddings of Figure~\ref{fig:problem} are induced by the edges
$\{(1,2),(2,3)\}$. In order to be induced by the vertices $\{1,2,3\}$
the embeddings should also include the edge $(1,3)$. We consider only
{\em connected} embeddings such that there is a path connecting each
pair of vertices. 

A {\em pattern} is an arbitrary graph. We say that an
embedding $e$ in $G$ is {\em isomorphic} to a pattern $p$ if and only if there exists a
one-to-one mapping between the vertices of $e$
and $p$, and between the edges of $e$ and $p$, 
such that: (i) each vertex (resp. edge) in $e$ has one matching
vertex (resp. edge) in $p$ with the same labels, and vice versa; (ii)
each matching edge connects matching vertices. In this case, we say
informally that $e$ {\em has pattern} $p$. Equivalently, we say that
there exists a {\em subgraph isomorphism} from $p$ to $G$, and we call
$e$ an {\em instance} of pattern $p$ in $G$.
Two embeddings are {\em
automorphic} if and only if they contain the same edges and vertices,
i.e.,  they are equal and thus also isomorphic (see for example Figure~\ref{fig:problem}).

\spara{Characterizing Graph Mining Problems:}
\name\ targets graph mining problems, which involve subgraph
enumeration.
%More formally, let
%$G=(V,E)$ and $P=(V_P, E_P)$ be two labeled graphs, with $P$ being a
%subgraph of $G$, i.e., $V_P \subseteq V$ and $E_P \subseteq E$. 
%We assume that $P$ is connected, and
%we say that $P$ is {\em subgraph isomorphic} to $G$ if there exists an
%injective (into) function $\phi:V_P \to V$, such that: 
%a) $(u, v) \in E_P$ if $(\phi(u),\phi(v)) \in E$,
%b) $\forall u\in V_P, \; L(u) = L(\phi(u))$, and
%c) $\forall (u,v) \in E_P, \; L(u,v) = L(\phi(u),\phi(v))$.
%Here $L$ is a function that assigns labels to vertices and edges. The
%function $\phi$ is called a {\em subgraph isomorphism}, and the subgraph
%comprising all the mapped vertices $\phi(u), \forall u\in V_P$ and edges 
%$(\phi(u),\phi(v)), \forall (u,v) \in V_E$ in the graph $G$ is called an
%{\em embedding} of $P$ in $G$.
%
Given an immutable input graph $G$ with labeled
vertices and edges, graph mining problems focus on the enumeration of
all {\em patterns} that satisfy some user-specified ``interestingness''
criteria. A given pattern $p$ is evaluated by listing its matches or
embeddings in the input dataset $G$, and then filtering out the
uninteresting ones. 
Graph mining problems often require
subgraph isomorphism checks to determine the patterns related to sets of
embeddings, and also graph automorphism checks to eliminate duplicate
embeddings.

We consider graph mining problems where one seeks connected graph patterns 
whose embeddings are vertex- or edge-induced. 
There are several variants of these problems.
The input dataset may comprise a collection of many graphs, or a single large graph. The
embeddings of a pattern comprise the set of (exact) isomorphisms from
the pattern $p$ to the input graph $G$. However, in inexact matching,
one can seek inexact or approximate isomorphisms (based on notions of
edit-distances, label costs, etc.). There are many variants in terms of
the interestingness criteria, such as frequent subgraph mining, dense
subgraph mining, or extracting the entire frequency distribution of
subgraphs up to some given number of vertices. Also related to graph
mining is the problem of graph matching, where a query pattern $q$ is
fixed, and one has to retrieve all its matches in the input graph $G$.
Solutions to graph matching typically use indexing approaches, which
pre-compute a set of paths or frequent subgraphs, and use them to
facilitate fast matching and retrieval. As such graph mining encompasses
the matching problem, since we have to both enumerate the patterns and
find their matches. Further, any solution to the single input graph
setting is easily adapted to the multiple graph dataset case. Therefore,
in this paper we focus on graph mining tasks on a single large input
graph. See~\cite{graphminingbook} for a state-of-the-art review of graph
mining and search.

\spara{Use Cases:}
Throughout this paper we consider three classes of problems: {\em
frequent subgraph mining}, {\em counting motifs}, and {\em finding
cliques}. We chose these problems because they represent different
classes of graph mining problems. The first is an example of
explore-and-prune problems, where only embeddings corresponding to a
frequent pattern need to be further explored. Counting motifs
requires exhaustive graph exploration up to some maximum size. Finding
cliques is an example of dense subgraph mining, and allows one to prune 
the embeddings using local information. We discuss these problems below in
more detail. 

Consider the task of {\em frequent subgraph mining (FSM)}, i.e., finding
those subgraphs (or patterns, in our terminology) that occur a minimum
number of times in the input dataset. The occurrences are counted using 
some anti-monotonic function on the set of its embeddings. The
anti-monotonic property states that the frequency of a supergraph should
not exceed the frequency of a subgraph, which allows one to stop
extending patterns/embeddings as soon as they are deemed to be
infrequent. There are several anti-monotonic metrics to measure the
frequency of a pattern. While their differences are not very important
for this discussion, they all require aggregating metrics from all
embeddings that correspond to the same pattern. We use the {\em minimum
image-based support} metric~\cite{imagesupport}, which defines the
frequency of a pattern as the minimum number of distinct mappings for
any vertex in the pattern, over all embeddings of the pattern. Formally, let $G$
be the input graph, $p$ be a pattern, and $\cal E$ its set of
embeddings. The pattern $p$ is frequent if $sup(p,{\cal E}) \ge \theta$,
where $sup()$ is the minimum image-based support function, and $\theta$
is a user-specified threshold. The FSM task is to
mine {\em all} frequent subgraph patterns from a single large graph $G$.

A {\em motif} $p$ is defined as a connected pattern of vertex-induced embeddings
that exists in an input graph $G$. Further, a set of motifs is
required to be non-isomorphic, i.e., there should obviously be no
duplicate patterns. In motif mining~\cite{graphlets}, the input graph
is assumed to be unlabeled, and there is no minimum frequency threshold;
rather the goal is to extract all motifs that occur in $G$ along with
their frequency distribution. Since this task is inherently exponential,
the motifs are typically restricted to patterns of order (i.e.,
number of vertices) at most $k$. For example, for $k=3$ we have two
possible motifs: a chain where ends are not connected and a
(complete) triangle. Whereas the original definition of motifs
targets unlabeled and induced patterns, we can easily generalize the
definition to labeled patterns.

In {\em clique mining} the task is to enumerate all complete subgraphs in the
input graph $G$. A complete subgraph, or clique, $p$ with $k$ vertices is
defined as a connected subgraph where each vertex has degree $k-1$, i.e.,
each vertex is connected to all other vertices (we assume there are no
self-loops).
The clique problem can also be generalized to maximal cliques, 
i.e., those not contained in any other clique, and frequent cliques, if we
impose a minimum frequency threshold in addition to the completeness
constraint.

%!TEX root = main.tex
\section{The \MODEL\ Model}
\label{sec:model}

We now discuss the \model\ model, which formalizes
how \name\ performs embedding exploration based on
application-specific {\em filter} and {\em process} functions.

\begin{algorithm}[t]
\caption{Exploration step in \name.}
\label{alg:exploration}
\begin{footnotesize}
\SetKwInOut{Input}{input}\SetKwInOut{Output}{output}
\Input{Set $I$ of initial embeddings for this step}
\Output{Set $F$ of extended embeddings for the next step (initially empty) }
\Output{Set $O$ of new outputs (initially empty)}
\BlankLine
\ForEach {$e \in I$ such that $\alpha(e)$}{
	add $\beta(e)$ to $O$\;
		$C \leftarrow $ set of all extensions of $e$ obtained by adding one incident edge / neighboring vertex\;
	\ForEach {$e' \in C$}{
		\If{$\phi(e')$ and there exists no $e'' \in F$ automorphic to $e'$}{
		    add $\pi(e')$ to $O$\;
%		    \lIf{$\tau(e')$}{
				add $e'$ to $F$\;
%			}
		}
	}
}
run aggregation functions\;
\end{footnotesize}
\end{algorithm}

%MJZ-Output should be the last thing, since we may have
%to check the extensions before outputing the current pattern, e.g., in
%closed graph mining
\subsection{Computational Model}
\name\ computations proceed through a sequence of {\em exploration
steps}. At a conceptual level, the system performs the operations
illustrated in Algorithm~\ref{alg:exploration} in each exploration step.
Each step is associated with an initial set $I$ containing
embeddings of the input graph $G$. 
\name\ automates the process of exploring the graph and expanding embeddings.
Applications are specified via two user-defined functions: 
a {\em filter function} $\phi$ and a {\em
process function} $\pi$. 
The application can optionally also define two additional functions 
%$\alpha$, $\tau$, and $\beta$, 
that will be described shortly.
%For the moment, consider the default case where $\alpha$ and $\tau$ return true and $\beta$ returns the empty set.

%The {\em aggregate-filter} function $\alpha$ is used for filtering that needs aggregated pattern information.
%The {\em termination-filter} function $\tau$ can speed up termination.
%Finally, the {\em aggregate-process} function $\beta$ can be used as a 
%completement to $\pi$ to process an embedding when aggregated information 
%is available.
%on the exploration step following 
%its generation, when aggregated information collected from all its siblings 
%generated on the same exploration step is available.

\name\ starts an exploration by generating the set of {\em candidate} embeddings $C$,
which are obtained by expanding the embeddings in $I$.
The system computes
candidates by adding one incident edge or vertex to $e$, depending on
whether it runs in edge-based or vertex-based exploration mode. In
edge-based exploration, an embedding is an edge-induced subgraph; in
vertex-based exploration, it is vertex-induced (see Section~\ref{sec:graphmining}). 
In the first exploration step, $I$ contains only a special 
``undefined" embedding, whose expansion $C$ consists of all edges or vertices of $G$,
depending on the type of exploration.
The application can
decide between edge-based or vertex-based exploration during
initialization. 

After computing
the candidates, the filter function $\phi$ examines each candidate $e'$ and
returns a Boolean value indicating whether $e'$ needs to be processed. 
If $\phi$ returns true, the process function $\pi$ takes $e'$ as input and outputs a set of user-defined values.
By default, $e'$ is then added to the set $F$. 
After an exploration step is terminated, $I$ is set to be equal to $F$ before the start of the next step.
The computation terminates when the set $F$ is
empty at the end of a step.

\name\ runs the outer loop of Algorithm~\ref{alg:exploration} in
parallel by partitioning the embeddings in $I$ over multiple servers each 
running multiple worker threads.
This distribution is transparent to applications.
Each execution step is executed as a superstep in the Bulk Synchronous Parallel model~\cite{bsp}.

\spara{Optional aggregation functions:}
The \model\ model described so far considers single embeddings in isolation.
A common task in graph mining systems is to aggregate values 
across multiple embeddings, for example grouping embeddings by pattern.
To this end, \name\ offers specific functions to execute user-defined aggregation for multiple embeddings. %.
%\name\ executes aggregation, if required by the application, 
%in parallel with each exploration step.
Aggregation can group embeddings by an arbitrary integer value or by pattern,
and is executed on candidate embeddings at the end of the exploration step in which they are generated.

The optional {\em aggregation filter} function $\alpha$ and {\em aggregation process} function
$\beta$ can filter and process an embedding $e$
in the exploration step following its generation. At that time, aggregated information
collected from all the embeddings generated in the same exploration step as $e$ becomes available.
The $\alpha$ function can take filtering decisions before embedding expansion 
based on aggregate values.
For example, in the frequent subgraph mining problem we can use
aggregators to count the embeddings associated with a given
pattern, and then filter out embeddings of infrequent patterns with $\alpha$.
Similarly, the $\beta$ function can be used to output aggregate
information about an embedding, for example its frequency.
By default, $\alpha$ returns {\em true} and $\beta$ does not add any output to $O$.

%The {\em termination filter} function $\tau$ is used to terminate computation when some pre-defined condition is reached.
%Consider for example an application that explores embeddings of maximum $n$ vertices in vertex-based exploration mode.
%The application could terminate the computation on all embeddings by defining 
%$\tau(e')$ to return false if and only if $e'$ has $n$ vertices.

\spara{Guarantees and requirements:}
Embedding exploration processes every embedding that is not filtered out.
More formally, it guarantees the following {\em completeness}
property: for each embedding $e$ such that $\phi(e) = \alpha(e) = \textit{true}$,
embedding exploration must add $\pi(e)$ and $\beta(e)$ to $O$.
%We will show in Section~\ref{sec:system} that \name\ satisfies this property.

Completeness assumes some properties of the user-defined functions that 
emerge naturally in graph mining algorithms.
The first property is that the application considers two embeddings $e$
and $e'$ to be equivalent if they are automorphic (see
Section~\ref{sec:graphmining}). Formally, \name\ requires what we call
{\em automorphism invariance}: if $e$ and $e'$ are automorphic, then 
each user-defined function must return the same result for $e$ and $e'$.
%$\pi(e) = \pi(e')$, $\phi(e) = \phi(e')$, $\alpha(e) = \alpha(e')$ and $\omega(e) = \omega(e')$. 
\name\ leverages this natural property of graph mining algorithms to prune
automorphic embeddings and substantially reduce the exploration space.

The second property \name\ requires is called {\em anti-monotonicity} and is
formally defined as follows: if $\phi(e) = \textit{false}$ then it holds that
$\phi(e') = \textit{false}$ for each embedding $e'$ that extends $e$. 
The same property holds for the optional filter function $\alpha$. 
This is one of the
essential properties for any effective graph mining method
and guarantees that once a filter function
tells the framework to prune an embedding, all its extensions
can be ignored.

%Finally, the optional aggregation-filter $\alpha$, and the 
%post-filter function $\omega$ must guarantee that they do not filter more embeddings than $\phi$. 
%Formally, let $e$ be an embedding and $C$ be the set of all its extensions obtained by adding one incident vertex or neighboring edge, depending on the exploration mode.
%The {\em filter consistency} property of $\alpha$ and $\omega$ requires that if there exists $e' \in C$ such that $\phi(e') = \textit{true}$ then it must hold $\alpha(e) = \omega(e) = \textit{true}$.
%This property trivially holds in the default case where $\alpha$ and $\omega$ always return true.

\subsection{Alternative Paradigms: Think Like a Vertex and Think Like a Pattern}
\label{sec:tlv-tlp-tle}
We can now contrast our embedding-centric or ``Think Like an Embedding''
(TLE) model with other approaches to build graph mining algorithms.
%These approaches are not general frameworks for algorithms design, like
%\name, but rather abstract principles in building algorithms that have
%different concrete instantiations for different graph mining tasks.
We empirically compare with these approaches in
Section~\ref{sec:eval}.

The standard paradigm of systems like Pregel is {\em vertex-centric} or
``Think Like a Vertex'' (TLV),
because computation and state are at the level of a vertex in the graph. 
TLV systems are designed to scale for large input graphs:
the information about the input graph is distributed and vertices only have information
about their local neighborhood.
In order to perform embedding exploration, each vertex can keep a set of 
local embeddings, initially containing only the vertex itself.
Then, to expand a local embedding $e$, vertices push it to the ``border" vertices of $e$
that know how to expand $e$ by adding its neighbors.
Each expansion results in a new embedding, 
which is sent again to border vertices and so on.
With this approach, highly connected vertices must take on a disproportionate fraction of embeddings to expand.
The approach also creates a significant number of duplicate messages
because each new embedding must be sent to all its border vertices. 
These limitations significantly affect performance. 
In our experiments, we observed that TLV-based embedding exploration algorithms can be
two orders of magnitude slower compared to TLE.% and TLP. %, the Our experiments show that TLV suffers from poor performance. % due to the these limitations mainly affect the performance.

The current state-of-the-art centralized methods for solving graph mining tasks typically adopt a
different, {\em pattern-centric} or ``Think Like a Pattern'' (TLP) approach. 
The key difference between TLP and the embedding-centric view 
of the \model\ model is that it is not necessary to 
explicitly materialize all embeddings: state can be kept at the granularity
of patterns (which are much fewer than embeddings) and embeddings may
be re-generated on demand.
The process starts with the set of
all possible (labeled) single vertices or edges as candidate patterns. 
It then processes embeddings of each pattern, often by recomputing them on the fly.
After aggregation and pattern filtering, the valid set of
patterns are extended by adding one more vertex or edge. Subgraph or pattern
mining proceeds iteratively via recursive extension, processing and
filtering steps, and continues until no new patterns are found.
Parallelizing the computation via partitioning it by pattern can easily
result in load imbalance, as our experiments show. This is because there
 are often only few patterns that are highly popular -- indeed, finding these
few patterns is the very goal of graph mining.
These popular patterns result in hotspots among workers and thus in poor load balancing.

%In contrast to the pattern-centric view, \name\ adopts an {\em
%embedding-centric} view. 
%The mining process starts from the actual
%single edges in the input graph $G$ as embeddings, and iteratively adds
%new edges or vertices to the set of visited ones. Note that each
%embedding consists of actual vertices and edges of the input graph
%dataset $G$, being an instance of some pattern $p$
%in $G$. Exploration proceeds iteratively, where each iteration visits
%new embeddings, calculates some aggregated frequency metric on them,
%and finally extends them by adding edges to obtain larger
%embeddings. The aggregation is often based on the {\em patterns}. For example,
%in Figure~\ref{fig:problem} the pattern omits vertex ids and only
%considers how vertices with different labels are connected. Its set of
%embeddings is shown in the right panel.

%!TEX root = main.tex
\section{\name: API, Programming, and Implementation} \label{sec:API}
We now describe the \name\ Java API and show how we use it to implement
our example applications.

\subsection{Arabesque API}
The API of \name\ is illustrated in Figure~\ref{fig:API}. 
The user must implement two functions: \texttt{filter},
which corresponds to $\phi$, and \texttt{process}, which
corresponds to $\pi$. 
The process function in the API is responsible for adding results to the output by invoking the \texttt{output} function provided by \name, which prints the results to the underlying file system (e.g. HDFS).
The optional functions $\alpha$ and $\beta$ correspond, respectively, to \texttt{aggregationFilter} and \texttt{aggregationProcess}.
These application-specific functions
are invoked by the \name\ framework as illustrated in
Algorithm~\ref{alg:exploration}.
All these functions have access to a local read-only copy of the graph.

%By default, \name\ considers the set of edges in the input graph and
%their incident vertices as the first initial set $I$. The user can
%change this by defining an \texttt{initialize} function, which returns an
%alternative initial subgraphs.

%\lstset{language=Java}
%\lstset{basicstyle=\ttfamily\scriptsize,breaklines=true}
\lstset{language=Oz} 
\lstset{basicstyle=\ttfamily\scriptsize,breaklines=true,
        keywordstyle=\color{blue},otherkeywords={process,reduce,termination,reduceOutput,filter,
        post,aggregation,output,map,readAggregate,mapOutput,aggregationFilter,aggregationProcess,terminationFilter,
        getAutomorphisms}}

\begin{figure}[!h]
\begin{footnotesize}
\underline{Mandatory application-defined functions:}
\begin{lstlisting}
boolean filter (Embedding e)
void process (Embedding e)
\end{lstlisting}
\underline{Optional application-defined functions:}
\begin{lstlisting}
boolean aggregationFilter (Embedding e)
void aggregationProcess (Embedding e)
Pair<K,V> reduce (K key, List<V> values)
Pair<K,V> reduceOutput (K key, List<V> value)
boolean terminationFilter (Embedding e)
\end{lstlisting}
\underline{\name\ functions invoked by applications:}
\begin{lstlisting}
void output (Object value)
void map (K key, V value)
V readAggregate (K key)
void mapOutput (K key, V value)
\end{lstlisting}
\end{footnotesize}
\caption{Basic user-defined functions in \name.}
\label{fig:API}
\end{figure}

For performance and scalability reasons, a MapReduce-like model is used to compute aggregated values. 
Applications can send data to reducers using the \texttt{map} function, which is part of 
the \name\ framework and adds a value to an aggregation group defined by a certain key. 
Many applications use the pattern of an embedding as the aggregation key.
\name\ detects when the key is a pattern and uses specific optimizations to make this aggregation efficient, as explained in Section~\ref{sec:optim}.
The application specifies the aggregation logic through the \texttt{reduce} function.
This function receives all values mapped to a specific key in one execution step and aggregates them.
Any method can read the values aggregated over the previous execution step using the \texttt{readAggregate} method.

Output aggregation is a special case where aggregated values are sent directly to the underlying distributed filesystem at the end of each exploration step and are not made available for later reads. 
It can be used through the methods \texttt{mapOutput} and \texttt{reduceOutput}.
Their logic is similar to the aggregation functions described previously, 
but aggregation is only executed when the whole computation ends.

The \texttt{terminationFilter} function can halt the computation of an embedding
when some pre-defined condition is reached.
\name\ applies this filter on an embedding $e$ after executing $\pi(e)$ and before adding $e$ to $F$. 
This is just an optimization to avoid unnecessary exploration steps.
For example, if we are interested in embeddings of maximum size $n$,
the termination filter can halt the computation after processing embeddings of size $n$ at step $n$. 
Without this filter, the system would have to proceed to step $n+1$ and generate all embeddings of size $n+1$
just to filter all of them out.

%\name\ does not keep multiple automorphic embeddings.
%For applications that need this information, 
%\name\ provides a \texttt{getAutomorphisms} function that returns
%all the automorphism of a given embedding in the graph.

% by invoking \texttt{toOutReduce}.
%At the end of the computation, \name\ invokes the user-defined output
%aggregation function \texttt{outReduce}, which calls \texttt{toOutput} to
%persist its aggregated results.

% !TEX root = main.tex
\subsection{Programming with Arabesque}
\label{sec:applications}

We used the \name\ API to implement
algorithms that solve the three problems discussed in
Section~\ref{sec:graphmining}. The pseudocode of the implementations is
in Figure~\ref{fig:applications}. 
Each of the applications consists of very few lines of code, a stark contrast
compared to the complexity of the specialized state of the art
algorithms solving the same problems~\cite{gspan,graphlets,bron-kerbosch}.

% Language Oz to control what we highlight.
\lstset{language=Oz} 
\lstset{basicstyle=\ttfamily\scriptsize,breaklines=true,
        keywordstyle=\color{blue},otherkeywords={process,reduce,reduceOutput,filter, 
        post, aggregationFilter, aggregationProcess}}
\begin{figure}[!h]
\begin{footnotesize}
    \begin{lstlisting} 
    boolean filter(Embedding e) { return true; }
    void process(Embedding e){ 
        map (pattern(e), domains(e)); } 
    Pair<Pattern,Domain> reduce
                 (Pattern p, List<Domain> domains){
        Domain merged_domain = merge(domains);
        return Pair (p, merged_domain); } 
    boolean aggregationFilter(Embedding e){ 
    	Domain m_domain = readAggregate(pattern(e));
    	return (support(m_domain)>=THRESHOLD); }
    void aggregationProcess(Embedding e) {
        output(e); }
    \end{lstlisting} 
    \begin{center} 
    (a) \textbf{Frequent subgraph mining} (edge-based exploration) 
    \end{center} 
    \vspace{-2mm}
    \begin{lstlisting} 
    boolean filter(Embedding e){ 
        return (numVertices(e) <= MAX_SIZE);} 
    void process(Embedding e){ 
    	mapOutput (pattern(e),1); } 
    Pair<Pattern,Integer> reduceOutput
           (Pattern p, List<Integer> counts){ 
    	return Pair (p, sum(counts)); }
    \end{lstlisting} 
    \begin{center} 
    \vspace{-4mm} 
    (b) \textbf{Counting motifs} (vertex-based exploration)
    \end{center} 
    \vspace{-2mm} 
    \begin{lstlisting} 
    boolean filter (Embedding e){
    	return isClique(e); } 
    void process (Embedding e){ output(e); } 
    \end{lstlisting} 
    \begin{center}
    \vspace{-4mm} 
    (c) \textbf{Finding cliques} (vertex-based exploration) 
    \end{center} 
\end{footnotesize}
\caption{Examples of \name\ applications.}
\label{fig:applications} 
\end{figure}

In frequent subgraph mining we use aggregation to calculate the support function described in Section~\ref{sec:graphmining}.
The support metric is based on the notion of domain, 
which is defined as the set of distinct mappings between
a vertex in $p$ and the matching vertices in any automorphism of $e$.
The \texttt{process} function invokes \texttt{map} to send the domains of $e$ 
to the reducer responsible for the pattern $p$ of $e$. 
For example, in Figure~\ref{fig:problem} the domain of the blue vertex on the top
of the pattern is vertex 1 in the first embedding and 3 in the second.
The function \texttt{reduce} merges all domains: 
the merged domain of a vertex in $p$ is the union of all its aggregated mappings.
Since expansion is done by adding one edge in each exploration step, we
are sure that all embeddings for $p$ are visited and processed in
the same exploration step. The \texttt{aggregationFilter} function reads the
merged domains of $p$ using \texttt{readAggregate} and computes the support, which
is the minimum size of the domain of any vertex in $p$.
It then filters out embeddings for patterns that do not have enough support.
Finally, the \texttt{aggregationProcess} function outputs all the embeddings having a frequent pattern (those that survive the aggregation-filter).
The implementation of this application consists of 280 lines of Java code. 
Of these, 212 are related to handling domains and computing support,
which are basic tasks required in any algorithm for frequent subgraph mining 
to characterize whether an embedding is relevant.
By comparison, the centralized baseline we use for evaluation, 
GRAMI~\cite{grami}, consists of 5,443 lines of Java code.

For motif frequency computation, we perform an exhaustive exploration
of all embeddings until we reach a given maximum size and count all
embeddings having the same pattern. Since the input graph is  not labeled
in this case, a pattern corresponds to a motif. 
The function \texttt{mapOutput} sends a value to the output reducer 
responsible for the pattern of $e$.
The \texttt{reduceOutput} function  outputs the sum of the counts for each motif $p$.
Our implementation consists of 18 lines of code, very few compared to the 
3,145 lines of C code of our centralized baseline (Gtries~\cite{gtries}).

In finding cliques we do a local pruning: if an embedding is not a
clique, none of its extensions can be a clique. Since we visit only
cliques, the evaluation function outputs the embeddings it
receives as input.
The \texttt{isClique} function checks that the newly added vertex 
is connected with all previous vertices in the embedding.
This application consists of 19 lines of code, while
our centralized baseline (Mace~\cite{mace}) consists of 4,621 lines of C code.

%In all the applications the termination-filter function stops the computation once embeddings reach a pre-defined size in terms of vertices or edges, depending on the exploration mode.
%We have omitted this two-line function from the pseudocode for simplicity.
%The same check should in principle be done inside the filter functions too, but it is redundant.

In all these examples it is easy to verify that the evaluation and
filter functions satisfy the anti-monotonic and automorphism invariance
properties required by the \name\ computational model.

% \todo{We should compare these LOCs with the ones of the centralized baselines and/or of the same applications using TLV}

%!TEX root = main.tex
%\begin{figure}[t] \centering
%    \includegraphics[scale=0.45]{diagrams/system.pdf}
%    \caption{Basic system architecture. Green circles represent exploration workers.}
%    \label{fig:system}
%\end{figure}

\subsection{\name\ implementation}
\label{sec:implementation}

% \todo{moved implementation here: logically it is consistent with the flow}
\name\ can execute on top of any system supporting the BSP model.
We have implemented \name\ as a layer on top of Giraph.
The implementation does not follow the TLV paradigm: 
we use Giraph vertices simply as {\em workers} that bear no relationship to any
specific vertex in the input graph. 
Each worker has access to a copy of the whole input graph 
whose vertices and edges consist of incremental numeric ids. % (consistent
% across all workers).
Communication among workers occurs
in an arbitrary point-to-point fashion and communication edges are not
related to edges in the input graph.
Giraph computations proceed through synchronous supersteps according to the BSP model: 
at each superstep, workers first receive all messages sent in the previous
superstep, then process them, and finally send new messages to be
delivered at the next superstep.
The operations of the workers are described in
Section~\ref{sec:system}.
%and~\ref{sec:optim}.
Aggregation functions, if specified, are executed using standard Giraph aggregators.
Optional output workers are used for applications that aggregate output values.
The input values for output aggregation persist over supersteps. Once
the computation is over, output workers aggregate all their input values
and output them.

%!TEX root = main.tex
\section{Graph Exploration Techniques} \label{sec:system}

%As discussed in Section~\ref{sec:model}, \name\ runs each exploration step in parallel using multiple workers threads, possibly running on a distributed system. 
%We consider a Bulk Synchronous Parallel model where each exploration step corresponds to a BSP superstep.
%At each step, a worker expands a partition of the set of initial embeddings $I$.

We now describe in more detail how \name\ performs graph exploration.
We first discuss the coordination-free exploration strategy used by workers 
to avoid redundant work.
We then introduce the techniques we use to store and partition embeddings efficiently.

\subsection{Coordination-Free Exploration Strategy} 
%\name\ considers applications that examine
%embeddings without differentiating the order in which edges or vertices
%were added during the exploration process. Two embeddings that have the
%same set of edges and vertices added in a different order are called
%automorphic. 
When running exploration in a distributed setting, multiple workers can
reach two ``identical", i.e., automorphic (see Section~\ref{sec:graphmining}), 
embeddings through different exploration paths. Consider
for example the graph of Figure~\ref{fig:problem}. Two different workers $w_1$ and
$w_2$ may reach the two embeddings in Figure~\ref{fig:problem}, one
starting from edge $(1,2)$ by adding $(2,3)$ and the other starting from
edge $(3,2)$ by adding $(2,1)$.
Since all user-defined functions are automorphism-invariant 
(see Section~\ref{sec:model}) we can avoid redundant work 
by discarding all but one of the identical, automorphic 
embeddings.
% so that only one of the workers proceeds with
%the exploration.

\name\ solves this problem using a novel coordination-free scheme based on the notion of 
{\em embedding canonicality}.
Informally, we need to select exactly one
of the redundant automorphic embeddings and elect it as ``canonical".
%The canonical embedding will be processed, while the remaining automorphic embeddings will be pruned.
In our example, before $w_1$ and $w_2$ execute the filter and process functions
on a new embedding $e$, \name\ 
executes an embedding canonicality check to verify whether $e$
can be pruned (see
Algorithm~\ref{alg:exploration}). 
This check runs on a single embedding without requiring coordination, as we now discuss.

A sound canonicality check must have the property of {\em uniqueness}: given the
set $S_e$ of all embeddings automorphic to an embedding $e$, there is
exactly one canonical embedding $e_c$ in $S_e$. We call $e_c$ the {\em
canonical automorphism} of $e$.
In \name\ we also need an additional property for canonicality checks
called {\em extendibility}, which we define as follows. Let $e$ be a
candidate embedding obtained by extending a parent embedding $e'$ by one
vertex or edge. The parent embedding $e'$ is canonical because it has
not been pruned. Let $e_c$ be the canonical automorphism of $e$.
Extendibility requires that $e_c$ is one of the extensions of $e'$. This
allows \name\ to prune the automorphisms of a parent $e'$
while still exploring the canonical automorphism of each child $e$.

%\begin{algorithm}[t] \caption{Building canonical embedding
%(vertex-based exploration)} \label{alg:build-canonical}

%\begin{footnotesize}
%\SetKwInOut{Input}{input}\SetKwInOut{Output}{output} \Input{Input graph
%$G$} \Input{Embedding $e = \langle v_1,\ldots,v_n \rangle$}
%\Output{Canonical embedding automorphic to $e$}
%
%\BlankLine
%
%$v' \leftarrow $ vertex in $e$ with smallest id\; $e_c \leftarrow
%\langle v' \rangle$\; remove $v'$ from $e$\; $i \leftarrow 1$\;
%\While{$e \neq \emptyset$}{ $S \leftarrow$ set of vertices of $e$ that
%are neighbor to $e_c$ in $G$\; $v' \leftarrow$ vertex in $S$ with
%smallest id\; append $v'$ to $e_c$\; remove $v'$ from $e$\; } \Return
%$e_c$\; \end{footnotesize} \end{algorithm}

\begin{algorithm}[!ht] \caption{\name's incremental embedding canonicality
    check (vertex-based exploration)} \label{alg:check-canonical}
    \begin{footnotesize}
        \SetKwInOut{Input}{input}\SetKwInOut{Output}{output}
        \Input{Input graph $G$} \Input{Canonical parent embedding $\langle
        v_1,\ldots,v_n \rangle$} \Input{Extension vertex $v$}
        \Output{$true$ iff $\langle v_1,\ldots,v_n, v \rangle$ is
        canonical}

\BlankLine

\If {$v_1 > v$ } { \Return $\textit{false}$\; }
$\textit{foundNeighbour} \leftarrow \textit{false}$\; 
\For{$i = 1 \ldots n$} {
	\If{ $\textit{foundNeighbour} = \textit{false}$ and $v_i$ neighbor of $v$ in $G$}{ 
		$\textit{foundNeighbour} \leftarrow \textit{true}$\;  } 
	\ElseIf{$\textit{foundNeighbour} = \textit{true}$ and $v_i > v$}{
		\Return $\textit{false}$\; } 
}

\Return $true$\; \end{footnotesize}
\end{algorithm}

\name\ checks embedding canonicality for each candidate
before applying the filter function.
There can be a huge number
of candidates, so it is essential that the check
is efficient.
We developed a linear-time algorithm, which is based on the 
following definition of canonical embedding.

Consider the case of vertex-based exploration (the edge-based case is analogous).
An embedding $e$ is canonical if and only if its vertices were visited in the following order:
start by visiting the vertex with the smallest id, and then recursively add
the neighbor in $e$ with smallest id that has not been visited yet.
For better performance, \name\ performs this canonicality check in an incremental fashion,
as illustrated in Algorithm~\ref{alg:check-canonical}.
When a worker processes an embedding $e \in I$, it can already assume that
$e$ is canonical because \name\ prunes non-canonical embeddings before 
passing them on to the next exploration step. \name, characterizes an embedding 
as the list of its vertices sorted by the order in which they
have been visited -- the embedding is vertex-induced so the list uniquely
identifies it.
When \name\ checks the canonicality of a new candidate embedding 
obtained by adding a vertex $v$ 
to a parent canonical embedding $e$,
the algorithm scans $e$ in search for the first neighbor $v'$ of $v$,
and then vertifies that there is no vertex in $e$ after $v'$ with higher id than $v$.

The Appendix section of this paper includes proofs showing
that Algorithm~\ref{alg:check-canonical} satisfies the uniqueness and extendibility
properties of canonicality checking.
We also show that these two properties, together with anti-monotoni\-city and 
automorphism invariance, are sufficient
to ensure that \name\ satisfies the completeness property of embedding exploration
(see Section~\ref{sec:model}).

\subsection{Storing Embeddings Compactly}
\label{subsec:lossy-trees}

Graph mining algorithms can easily generate trillions of embeddings as intermediate state.
Centralized algorithms typically do not explicitly store the embeddings they have explored.
Instead, they use application-specific knowledge to rebuild embeddings on the fly 
from a much smaller state.

The only application-level logic available to \name\ consists of the filter and process functions,
which are opaque to the system.
%which do not provide information on how to store embeddings.
After each exploration step, \name\ receives a set of embeddings filtered by the application
and it needs to keep them in order to expand them in the next step.
Storing each embedding separately can be very expensive.
As we have seen also in Algorithm~\ref{alg:check-canonical}, \name\ represents embeddings as sequences of numbers, representing vertex or edge ids depending on whether exploration is vertex-based or edge-based.
%This is because, for example, in vertex-based exploration we can rebuild an embedding from a set of vertices by including all edges among these vertices.
Therefore, we need to find techniques to store sets of sequences of integers efficiently.

Existing compact data structures such as prefix-trees are too expensive because
we would still have to store a new leaf for each embedding 
in the best case, since all canonical embeddings we want to represent
are different.
In contrast, \name\ uses a novel technique called Overapproximating Directed Acyclic Graphs, or \ezips.
At a high level, \ezips\ are similar to prefix trees where all nodes at the same depth corresponding to the same vertex in the graph are collapsed into a single node.
This more compact representation is an overapproximation (superset) of the set of sequences we want to store.
%For example, a single path in the tree may merge a prefix $p$ and a suffix $s$ even if the string $p \cdot s$ does not appear in the set. 
When extracting embeddings from \ezips\ we must do extra work to discard spurious paths.
\Ezips\ thus trade space complexity for computational complexity.
This makes \ezips\ similar to the representative sets introduced in~\cite{repsets}, which
have a higher compression capability but require more work to filter out spurious embeddings. 
In addition, it is harder to achieve effective load balancing with representative sets.
We now discuss the details of \ezips\ and show how they are used in \name.

\spara{The \EZip\ Data Structure:}
For simplicity, we focus the discussion on \ezips\ for vertex-based exploration; the edge-based case is analogous. 
The \ezip\ for a set of canonical embeddings consists of as many arrays as the number of vertices of all the embeddings.
The $i^{th}$ array contains the ids of all vertices in the $i^{th}$ position in any embedding.
Vertex $v$ in the $i^{th}$ array is connected to vertex $u$ in the $(i+1)^{th}$ array if there is at least one canonical embedding with $v$ and $u$ in position $i$ and $i+1$ respectively in the original set.
An example of \ezip\ is shown in Figures~\ref{fig:ezip-graph} and~\ref{fig:ezip}.

\begin{figure}[!h]
\centering
\includegraphics[scale=0.4]{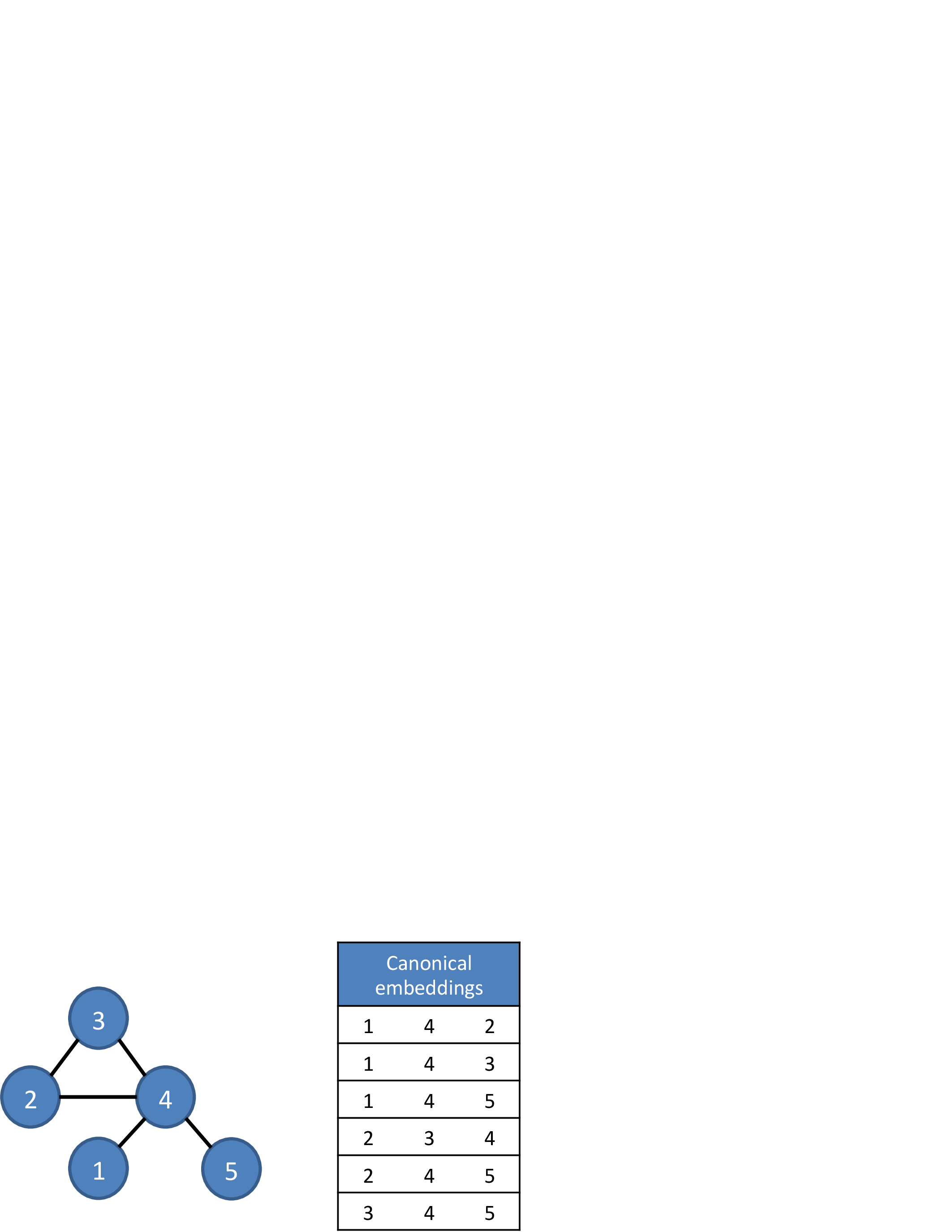}  
\caption{Example graph and its set of $S$ of canonical vertex-induced embeddings of size 3.}
\label{fig:ezip-graph}
%\end{figure}
%
%\begin{figure}
%\centering
\includegraphics[scale=0.4]{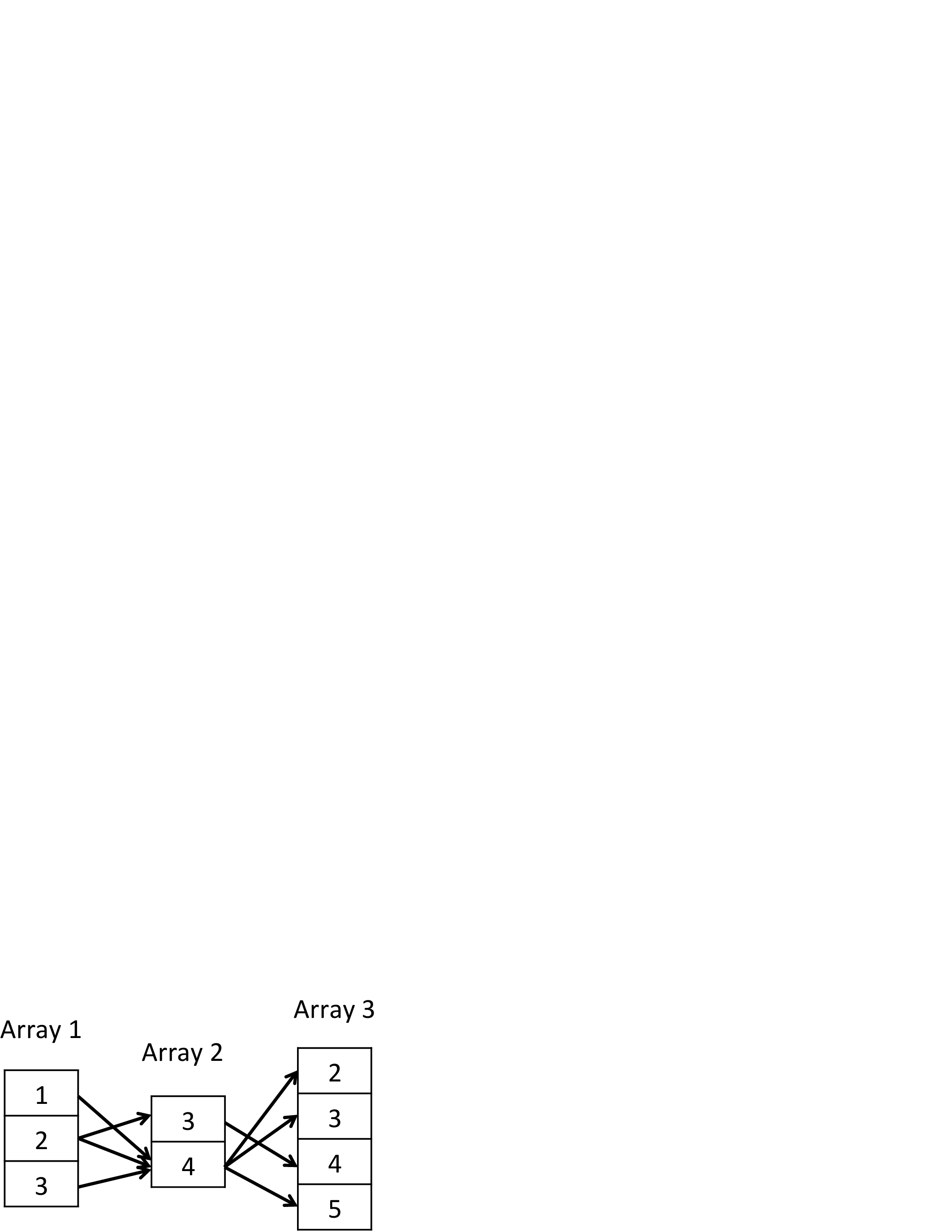}  
\caption{\Ezip\ for the example of Figure~\ref{fig:ezip-graph}. It also encodes spurious embeddings such as $\langle 3,4,2 \rangle$.}
\label{fig:ezip}
\end{figure}

Storing an \ezip\ is more compact than storing the full set of embeddings.
In general, in a graph with $N$ distinct vertices we can have up to $O(N^k)$ different embeddings of size $k$.
With \ezips\ we only have to keep edges between $k$ arrays, where $k$ is the size of the embeddings, so the upper bound on the size is $O(k \cdot N^2) = O(N^2)$ if $k$ is a constant much smaller than $N$.

It is possible to obtain all embeddings of the original set by simply following the edges in the \ezip.
However, this will also generate spurious embeddings that are not in the original set.
Consider for example the graph of Figure~\ref{fig:ezip-graph} and its set of canonical embeddings $S$.
Expanding the \ezip\ of Figure~\ref{fig:ezip} generates also $\langle 3,4,2 \rangle \not\in S$.
Filtering out such spurious embeddings requires application-specific logic.

\spara{\EZips\ in \name:}
In \name, workers produce new embeddings in each exploration step and add them to the set $F$ (see Algorithm~\ref{alg:exploration}).
Workers use \ezips\ to store $F$ at the end of an exploration step, and extract embeddings from \ezips\
at the beginning of the next step.

Filtering out spurious embeddings, as discussed, requires application-specific logic.
Applications written using the filter-process model give \name\ enough information to perform filtering.
In fact, workers can just apply the same filtering criteria as Algorithm~\ref{alg:exploration}: 
the canonicality check and the user-defined filter and aggregate filter functions.
% \todo{What if an embedding was filtered out using the aggregation filter? This cannot be executed separately by each worker. We could try to argue that since all embeddings extracted from the \ezip\ have the same size, spurious embeddings have been part of some $C$ set in the last step, so they have not been filtered out using an aggregation filter. On the other hand, this might not hold if the parent of an embedding was filtered out using an aggregation filter. Again, we could assume that the aggregation filter has an {\em isomorphism invariance} property: if the filter removes an embedding of a pattern, then it also removes all embeddings for that pattern. In that case we could argue that since \ezips\ store embedding relative to the same pattern, this pattern has not been already filtered out by the aggregation function. This however implies that it is necessary to store the labels of the vertices (or edges) in the pattern together with each array of the \ezip.}
If any of these checks is negative for an embedding, we know that the embedding itself 
or, thanks to the anti-monotonicity property, one of its parents, was filtered out.
In addition, since our canonicality check is incremental, we do not need to consider embeddings 
that extend a non-canonical sequence, so we can prune multiple embeddings at once.

%Workers share these embeddings at the end of each exploration step, unless the computation terminates.
%They first send all embeddings they have produced and then use a deterministic load partitioning scheme to decide the set of embeddings to be processed in the next steps.
%We now describe the broadcasting step.
%
%--- HERE

%\spara{Broadcasting \ezips\ efficiently.}
After every exploration step,
\name\ merges and broadcasts the new embeddings, and thus the \ezips, to every worker.
In order to reduce the number of spurious embeddings, 
workers group their embeddings in one \ezip\ per pattern.
For each pattern, workers merge their local per-pattern \ezips\ into a single per-pattern global \ezip.
Each per-pattern global \ezip\ is replicated at each worker before the beginning of the next exploration step.

Merging \ezips\ requires merging edges obtained by different workers.
For example, consider the first two arrays in Figure~\ref{fig:ezip}.
One worker might have explored the edge $\langle 2, 3 \rangle$ and another worker $\langle 2, 4 \rangle$.
Edges of \ezip\ arrays are indexed by their initial vertex, so the two arrays from the two workers will have two different entries for the element $2$ of the first array that need to be merged.
Edge merging is very expensive because of the high
number of edges in an \ezip.
Therefore, \name\ uses a map-reduce step to execute edge merging.
% without the redundant network or computational cost of the previous solution.
Each entry of each array produced by a worker is mapped to the worker associated to its index value.
This worker is responsible for merging all edges for that entry produced by all workers into a single entry.
The entry is then broadcast to all other workers, which computes the union of all entries in parallel.

\subsection{Partitioning Embeddings for Load Balancing}
After broadcast and before the beginning of the next exploration step, every worker obtains the same set of \ezips, one for each pattern mined in the previous execution step. 
The next step is to partition the set $I$ of new embeddings (see Algorithm~\ref{alg:exploration}) 
among workers.
This is achieved by partitioning the embeddings in each pattern \ezip\ separately.

Workers could achieve perfect load balancing by using a round-robin strategy to share work: worker 1 takes the first embedding, worker 2 the second and so on.
However, having workers iterate through all embeddings produced in the previous step, including those that they are not going to process, is computationally intensive.
Therefore, workers do round robin on large blocks of $b$ embeddings. %, where $b=10^4$ by default.
The question now is how to identify such blocks efficiently.

\name\ associates each element $v$ in every array with an estimate of how many embeddings, canonical or not, can be generated starting from $v$.
To this end, \name\ assigns a cost 1 to every element of the last array, and it assigns to an element of the $i^{th}$ array the sum of the costs of all elements in the $(i+1)^{th}$ array it is connected to.
Load can then be balanced by having each worker take a partition of the elements in the first array that has approximately the same estimated cost.
While partitioning, it could happen that the cost of an element $v$ of the first array needs to be split among multiple workers.
In this case, the costs associated to the elements of the second array connected to $v$ are partitioned. 
The process continues recursively on subsequent arrays until a balanced load is reached.

\subsection{Two-Level Pattern Aggregation for Fast Pattern Canonicality Checking} 
\label{sec:optim}

\name\ uses a special optimization to speed up per-pattern aggregation, as discussed in Section~\ref{sec:API}.
The optimization was motivated by the high potential cost of this type of aggregation, as we now discuss.

Consider again the example of Figure~\ref{fig:problem} and assume that we want to count the instances
of all single-edge patterns.
The three single-edge embeddings $(1,2)$, $(2,3)$, and $(3,4)$
should be aggregated together since they all have a blue and a yellow endpoint.
Therefore, their two patterns $(blue,yellow)$ and $(yellow,blue)$ should be considered equivalent because they are isomorphic (see Section~\ref{sec:graphmining}).
The aggregation reducer for these two patterns is associated to a single {\em canonical pattern} that is isomorphic to both.
Mapping a pattern to its canonical pattern thus entails solving the
graph isomorphism problem, for which no polynomial-time solution is known~\cite{garey2002computers}.
This makes pattern canonicality much harder than embedding canonicality, 
which is related to the simpler graph automorphism problem.

Identifying a canonical pattern for each single candidate embedding would be a significant bottleneck for \name, as we show in our evaluation, because of the sheer number of candidate embeddings that are generated at each exploration step.
\name\ solves this problem by using a novel technique called {\em two-level pattern aggregation}. 

The first level of aggregation occurs based on what we call {\em quick} patterns.
A quick pattern of an embedding $e$ is the one obtained, in linear time, by simply scanning all vertices (or edges, depending on the exploration mode) of $e$ and extracting the corresponding labels.
The quick pattern is calculated for each candidate embedding.
In the previous example, we would obtain the quick pattern $(blue,yellow)$ for the embeddings $(1,2)$ and $(3,4)$ and the quick pattern $(yellow,blue)$ for the embedding $(2,3)$.
Each worker locally executes the reduce function based on quick patterns.
Once this local aggregation completes, a worker computes the canonical pattern $p_c$ 
for each quick pattern $p$ and sends the locally aggregated value to the reducer for $p_c$. 
\name\ uses the {\em bliss} library to determine canonical patterns~\cite{bliss}.
%a standard DFS code to determine canonical patterns, as done in~\cite{gspan}.

In summary, instead of executing graph isomorphism for a very large number of candidate embeddings, two-level pattern aggregation computes a quick pattern for every embedding, obtains a number of quick patterns which is orders of magnitude smaller than the candidate embeddings, and then calculates graph isomorphism only for quick patterns.
%{\color{red} Our evaluation shows that, if applications used pattern aggregation without this technique, 
%the system would spend most of its time computing graph isomorphism.}
%The technique is also generic because it can be applied in all graph mining algorithms, including centralized ones, that compute aggregated metrics by pattern, as for example~\cite{gspan,graphlets}.

%!TEX root = main.tex
\section{Evaluation} \label{sec:eval}

\subsection{Experimental Setup}

\spara{Platform:}
We evaluate Arabesque using a cluster of $20$ servers. Each server has
2 Intel Xeon E5-2670 CPUs with a total of 32 execution threads at
2.67GHz per core and 256GB RAM. The servers are connected with a 10 GbE
network.
Hadoop 2.6.0 was configured so that each physical 
server contains a single worker which can use all 32 execution threads (unless otherwise stated). 
\name\ runs on Giraph development trunk from January 2015
with added functionality for obtaining cluster deployment details and improving
aggregation performance. These modifications amount to 10 extra lines of code.

% \begin{table}[!t]
% \centering
% \begin{tabular}{|c | c | c |}
% \hline
%  				&  			& \name\ \\
%  Application 	& Baseline & (1 thread)\\
% \hline
% Motifs (MS=3) & G-Tries: 50s & 37s \\
% Cliques (MS=4) & Mace: 281s & 385s \\
% FSM (S=300) & GrowStore: 660s & 5s \\
% FSM (S=300) & Grami: 3s & 5s \\
% FSM (S={\color{red} ??}) & Grami: {\color{red} XX} & {\color{red} XX}\\
% \hline
% \end{tabular}
% \caption{Execution times of state-of-the-art centralized implementations and \name\ running on a single thread.}
% \label{tab:single-thread}
% \end{table}

\begin{table}[!ht]
\centering
\begin{small}
    \begin{tabular}{@{}lrrrr@{}} 
    \toprule 
    	% \hline
    	 		& \textbf{Vertices} 		& \textbf{Edges} 		& \textbf{Labels} 	& \textbf{Av. Deg.} \\ 
    	 		\toprule 
    	CiteSeer 	& 3,312 		& 4,732 		& 6 		& 2.8 \\ 
    	MiCo  		& 100,000 		& 1,080,298 	& 29 		& 21.6 \\
       	% Amazon (G3)		& 410,236 		& 3,356,824 	& 5 		& 0.4 \\ 
		Patents	& 2,745,761		& 13,965,409	& 37 & 10 \\
		Youtube	& 4,589,876		& 43,968,798	& 80 & 19 \\
		% PDB 		& 17,400,590	& 68,599,675	& 22 & {\color{red} XXX} \\
		SN 		    & 5,022,893 & 198,613,776   & 0  & 79\\
		Instagram & 179,527,876 & 887,390,802 & 0 & 9.8 \\

       \bottomrule
    \end{tabular}
\end{small}
\caption{Graphs used for the evaluation.} \label{tab:data-details} \end{table}

\spara{Datasets:}
We use six datasets (see Table \ref{tab:data-details}). %, the
%third from~\cite{snapnets}  
CiteSeer~\cite{grami} has publications as vertices, with their Computer Science area
as label, and citations as edges. 
MiCo~\cite{grami} has authors as vertices, which
are labeled with their field of interest, and co-authorship of a paper
as edges. 
%Amazon is a snapshot of the product co-purchasing network
%collected in May 2003. If a product $i$ is frequently co-purchased with
%product $j$, the graph contains a directed edge from $i$ to $j$. The
%labels were assigned randomly. 
Patents~\cite{patent_data} contains citation edges between US Patents between January 1963 and
December 1999; the year the patent was granted is considered to be the label.
Youtube~\cite{youtube_data} lists crawled video ids and related videos for each video posted from February 2007 to July 2008. The label is a combination of the video's rating and length.
% PDB was created from the 3D distance matrices using different number of known 
% protein 3D structures from the RCSB Protein Data Bank (\url{www.rcsb.org}).
% The labels are the amino acids at each residue in a protein.
SN, is a snapshot of a real world Social Network, which is not publicly available.
Instagram is a snapshot of the popular photo and video sharing social network 
collected by~\cite{hamed_2015}. %Starting from a seed of 3 million users, the authors collected 
% all their followers and all the users they follow. The graph is build using the followers connections. 
 We consider all the graphs to be undirected. 
Note that even if some of these graphs are not 
very large, the explosion of the intermediate computation and state
required for graph exploration (see Figure~\ref{fig:exp-growth}) makes
them very challenging for
centralized algorithms.

\spara{Applications and Parameters:}
We consider the three applications discussed in Sections~\ref{sec:graphmining}, which we label FSM, Motifs and Cliques.
By default, all Motifs executions are run with a maximum embedding size
of 4, denoted as MS=4, whereas
Cliques are run with a maximum embedding size of MS=5. For FSM, 
we explicitly state the support, denoted S, used in each experiment
as this parameter is very sensitive to the properties of the input graph.

\subsection{Alternative Paradigms: TLV and TLP} 
 We start by motivating the necessity for a new framework for distributed 
 graph mining.
We evaluate the two alternative computational paradigms that we 
discussed in Section~\ref{sec:tlv-tlp-tle}. \name\ (i.e., TLE) will be evaluated in the next subsection. 
We consider the problem of 
frequent subgraph mining (FSM) as a use case.
% The first is to implement FSM using Pregel's TLV paradigm.
% The second is to parallelize existing centralized algorithms.
% Since centralized algorithms for FSM organize their state and computation around the notion of a pattern,
% it is simple to distribute computation by having different workers be responsible of different patterns.
% We call this approach ``think like a pattern" (TLP).
% The is the ``think like an embedding" approach introduced by \name.
Note that there are currently no distributed solutions to solve FSM on a single large input graph in the literature. 

%\begin{figure}[!ht]
%\includegraphics[scale=0.5]{figs/scalability-tlv-tlp.eps}
%\caption{Scalability Analysis}
%\label{fig:scalability_tlv}
%\end{figure}

% \begin{table}[!t]
% \centering
%     \begin{tabular}{|c|c|c|c|c|} 
%     	\hline
%     	 		& Speedup 		& Time(s) 		& Messages 	& C.Coef. \\ \hline 
%     	CiteSeer 	& 3,312 		& 4,732 		& 6 		& 0.16 \\ 
%     	MiCo  		& 100,000 		& 1,080,298 	& 29 		& 0.14 \\
%        	% Amazon (G3)		& 410,236 		& 3,356,824 	& 5 		& 0.4 \\ 
% 		Patents	& 2,745,761		& 13,965,409	& 37 & {\color{red} XXX} \\
% 		Youtube	& 4,589,876		& 43,968,798	& 80 & {\color{red} XXX} \\
% 		PDB 		& 17,400,590	& 68,599,675	& 22 & {\color{red} XXX} \\
%        \hline 
%     \end{tabular}
% \caption{Graphs used for the evaluation.} \label{tab:data-details} \end{table}

\begin{figure}[!ht]
\centering
\begin{tikzpicture}
    \pgfplotstableread[col sep=comma]{scalability-tle.data}\scalabilitytle
    \pgfplotstableread[col sep=comma]{scalability-tlp.data}\scalabilitytlp
    \pgfplotstableread[col sep=comma]{scalability-tlv.data}\scalabilitytlv
    %\pgfplotstableread[col sep=comma]{speedup5-cliques-ezip-youtube.data}\speedupcliquesyt
    %\pgfplotstableread[col sep=comma]{speedup5-cliques-ezip-youtube.data}\speedupcliquesyt
    \pgfplotstableread[col sep=comma]{
	nodes, speedup
	1, 1
	5, 5
	10, 10
	}\speedupideal
    \begin{axis}[%
	width=\linewidth,
	height=1.65in,
	legend style={at={(0.33,0.75)}, anchor=south, legend columns=3, font=\scriptsize},
	bar width=1.5em,
	xmin=1,
	ymin=0,
	ymax=10,
	xmax=10,
	ylabel=Speedup,
	xlabel=Number of nodes (32 threads),
	x label style={font=\scriptsize},
	y label style={font=\scriptsize},
	%symbolic x coords={SSSP, TC, PR, SC},
	%xticklabels from table={\numembeddings}{superstep},
	y tick label style={align=center,font=\scriptsize},
	x tick label style={align=center,font=\scriptsize,text width=4em},
	%nodes near coords,
	every node near coord/.append style={fill=white, fill opacity=0.7, anchor=south,font=\tiny, yshift=-0.15em},
	enlarge x limits=0.075,
	xtick=data,
	mark options={solid}
	]
    \addplot[color=black, mark=*]
	table[x=nodes, y=speedup]{\scalabilitytle}; 
    \addplot[color=pastelblue, mark=diamond*]
	table[x=nodes, y=speedup]{\scalabilitytlp}; 
    \addplot[color=pastelred, mark=triangle*]
	table[x=nodes, y=speedup]{\scalabilitytlv}; 
    \legend{Ideal, TLP, TLV}
    \end{axis}
\end{tikzpicture}
    \caption{Scalability Analysis of Alternative Paradigms: FSM (S=300) on CiteSeer.}%
    \label{fig:scalability_tlv}
\end{figure}

% \begin{figure}[!t]
% \begin{tikzpicture}
%     \pgfplotstableread[col sep=comma]{results/scalability-tle.data}\scalabilitytle
%     \pgfplotstableread[col sep=comma]{results/scalability-tlp.data}\scalabilitytlp
%     \pgfplotstableread[col sep=comma]{results/scalability-tlv.data}\scalabilitytlv
%     \begin{axis}[%
% 	width=\linewidth,
% 	height=2.2in,
% 	legend style={legend columns=-1, font=\scriptsize, /tikz/every even column/.append style={column sep=0.1cm}},
% 	ybar,
% 	ymax=100,
% 	ymin=0,
% 	bar width=1.5em,
% 	ylabel=Normalized Execution Time (s),
% 	xlabel=Number of machines,
% 	x label style={font=\scriptsize},
% 	y label style={font=\scriptsize},
% 	%symbolic x coords={SSSP, TC, PR, SC},
% 	xticklabels from table={\scalabilitytle}{nummachines},
% 	y tick label style={align=center,font=\scriptsize},
% 	x tick label style={align=center,font=\scriptsize,text width=4em},
% 	%nodes near coords,
% 	every node near coord/.append style={anchor=south,font=\tiny, yshift=-0.15em},
% 	enlarge x limits=0.2,
% 	xtick=data]
%     \addplot[fill=pastelblue, area legend]
% 	table[x expr=\coordindex, y=normalizedtime]{\scalabilitytle}; 
%     \addplot[fill=pastelgreen, area legend]
% 	table[x expr=\coordindex, y=normalizedtime]{\scalabilitytlv}; 
%     \addplot[fill=pastelred, area legend]
% 	table[x expr=\coordindex, y=normalizedtime]{\scalabilitytlp}; 
%     \legend{Ideal, TLV, TLP}
%     \end{axis}
% \end{tikzpicture}
%     \caption{Scalability Analysis of Alternative Paradigms: FSM(S=300) on CiteSeer}%
%     \label{fig:scalability_tlv}
% \end{figure}

\textbf{The Case of TLV:} Our TLV implementation %, %which we call \tlvgspan,
globally maintains the set of embeddings that have been visited, much like \name.
The implementation adopts the TLV approach as described in Section~\ref{sec:tlv-tlp-tle}
and uses the same coordination-free technique as \name\ to avoid redundant work.
The TLV implementation also uses application-specific approaches 
to control the expansion process. Our TLV implementation of FSM uses this feature to follow 
the standard depth-first strategy of gSpan~\cite{gspan}. 
%The
%remaining details of \tlvgspan\  are beyond the scope of this paper. We
%run \tlvgspan\ on a modified Giraph version, that implements a combiner
%based aggregation to improve the performance compared to the vanilla
%Giraph. Even after this improvement, 

In Figure~\ref{fig:scalability_tlv}, we show the scalability of FSM with support 300 using
the CiteSeer graph. 
As seen from the figure, TLV does not scale beyond 5 servers. A major scalability bottleneck 
is that each embedding needs to be replicated to each vertex that has the necessary local information 
to expand the embedding further. In addition, high-degree vertices need to expand a 
disproportionate fraction of embeddings. CiteSeer is a scale-free graph thus affecting the scalability 
of TLV. 

Overall TLV performance is two
orders of magnitude slower compared to \name. TLV requires more
than $300$ seconds to run FSM on the CiteSeer graph, while \name\
requires only $7$ seconds for the same setup. 
The total messages
exchanged for this tiny graph is $120$ million, versus $137$ thousand
messages required by \name. Due to the hot-spots inherent to the graph
structure, or the label distribution, and the extended duplication of
state that the TLV paradigm requires, we conclude that TLV is not suited for
solving these problems.

% \todo{put all numbers in a table with runtime, speedup, and messages
% exchanged}

% \begin{figure}[t] \missingfigure{TLV execution} \caption{Frequent
%     subgraph mining: TLV implementation.}
%     \label{fig:frequent-subgraph-tlv} \end{figure}

%\tlvgspan\ prunes the exploration space by only expanding embeddings
%that are instances of a frequent pattern. Determining frequency
%requires global aggregation. This can be elegantly performed in the
%Giraph system, which is a popular implementation of Pregel's TLV
%paradigm, using aggregators and combiners~\cite{giraph}. Vertices can
%send key-value pairs to global aggregator functions, which are
%user-defined functions having one execution instance  per key. During a
%superstep, a vertex can read the current value of the aggregator
%reflecting all key-value pairs sent in all previous supersteps. In our
%algorithm, we use an aggregator key per pattern and the frequency
%metric of GRAMI, which can be applied to the more general problem where
%we have a single large graph as input. Combiners are user-defined
%functions running local to each distributed Giraph process. A combiner
%receives all messages produced locally at the process and merges them
%into fewer messages in order to reduce network traffic. We use
%combiners to pre-aggregate all messages sent to the same aggregator
%into a single message.

\textbf{The Case of TLP:} The TLP implementation is based on GRAMI~\cite{grami},
which represents the state of the art for centralized FSM. 
GRAMI keeps state on a per-pattern basis, so few relatively straightforward 
changes to the code-base were sufficient to derive
a TLP implementation where patterns are partitioned
across a set of distributed workers.

GRAMI uses  a number of optimizations that are specific to FSM. 
In particular, it avoids materializing
all embeddings related to a pattern, a common approach for TLP algorithms.
Whenever a new pattern is generated, its instances are re-calculated on the fly, stopping
as soon as a sufficient number of embeddings  to pass the frequency threshold is found.
GRAMI thus solves a simpler problem than the TLV and \name\ implementations 
of FSM: it does not output all frequent embeddings but only their patterns.

The TLP version of GRAMI is significantly faster than TLV: GRAMI runs in 3 seconds for the same 
input graph and support compared to the hundreds of seconds for TLV. 
However, TLP suffers from extremely limited scalability, thus the performance can't be improved compared to the centralized
algorithm as seen in Figure~\ref{fig:scalability_tlv}.
This is again because of hot-spots: it is quite common
that only a few frequent patterns exist. Thus, irrespective of the size
of the cluster, only a few workers (equal to the number of
these frequent patterns) will be used. 
In addition, due to the skewed nature of many graphs, load is unlikely to be well balanced among these patterns.
Some patterns will typically be much more popular than others and the
corresponding workers will be overloaded.
Spreading the load by using techniques like work stealing is not viable since there exists no
straightforward way to split the work associated with a pattern in GRAMI.
The same problem holds for the other example applications we consider.
For instance, in the case of Motifs, for depth equal to 3, there are only $2$ patterns to process. 
While TLP can provide the best performance for a single thread (centralized) scenario, 
its lack of scalability limits the usefulness in
distributed frameworks. 
%The available scalability
%will be highly depend on the number of labels and their distribution.    
%While TLP for some limited cases can utilize work stealing to alleviate
%the problem of the processing skewness, in general it is plagued by the
%load imbalance, and there is no evident work around.

\subsection{\name: The TLE Paradigm}

We now focus on evaluating the performance of \name\ and its optimizations.
While Arabesque is generic enough to describe easily most graph mining
algorithms, the internals hide a powerful optimized engine with a
number of innovations that allows the system to efficiently process the
trillions of embeddings that graph exploration generates. %Results
%for compression, 1-level and 2-level hashing, pattern code. 

%In figure x, we plot the load across a cluster for different TLV and
%TLP algorithms and graphs. Both approaches create hot-spots during
%computation. 
\spara{Single Thread Execution Performance:} 
Arabesque is built from scratch
as a generic distributed graph mining system.
Since it has been observed that a centralized
implementation can outperform distributed frameworks that utilize
thousands of threads~\cite{scalability_cost}, we next show that the performance of Arabesque 
running on a single thread is
comparable to the best available centralized implementations.
%We are interested on the overhead that our approach introduces, and the
%effect of generality on the performance of the system.
 For Motifs we use G-Tries~\cite{gtries} as the centralized implementation. For Cliques 
 we use Mace~\cite{mace}. For FSM we use GRAMI~\cite{grami} to discover the frequent patterns and 
then VFLib~\cite{Cordella2004} to discover the embeddings. The centralized implementations
are highly optimized C/C++ implementations with the exception of GRAMI which uses Java.  
%  G-Tries is a highly optimized single-threaded 
% C++ implementation. % that utilizes the efficient Nauty library for canonicality checking. 
% Mace is a C implementation, GRAMI uses Java, while VFLib uses C++.
 
 For these experiments, we run Arabesque on a single worker with a single
 thread. We report the total time 
excluding the start-up and shutdown overhead to run a Giraph job in Hadoop (on average 10 seconds).
%, the cost of
%input/output to HDFS, and the cost to serialize and de-serialize the
%state during computation.  

\begin{table*}[!ht]
\centering
\begin{tabular}{@{}llr@{}}
\toprule
  &  \textbf{Centralized}			& \textbf{\name} \\
 \textbf{Application} 	& \textbf{Baseline} & \textbf{(1 thread)}\\
\toprule
Motifs (MS=3) & G-Tries: 50s & 37s \\
Cliques (MS=4) & Mace: 281s & 385s \\
% FSM (S=300) & GrowStore: 660s & 5s \\
FSM (S=300) & Grami+VFLib: 3s\,+\,1.8s & 5s \\
% FSM (S=300) & Grami++: 3s & 5s \\
% FSM (S=260) & Grami: 5m & \textgreater 12h \\
FSM (S=220,MS=7) & Grami+VFLib: 13s\,+\,1,800s & 8,548s \\
 \bottomrule
\end{tabular}
\caption{Execution times of centralized implementations and \name\ running on a single thread. 
Motifs and Cliques were run with the MiCo graph, FSM with CiteSeer.}
\label{tab:single-thread}
\end{table*}

Table~\ref{tab:single-thread} shows a comparison between 
baseline single-threaded implementations and \name.
%\todo{Having GRAMI beat us sometimes and sometimes not would make the reader wonder if this happens also with the other applications. This could be fixed by trying different parameters for the other applications too or by removing the case where we beat GRAMI.}
%Mace takes 281 seconds to compute the Cliques of depth 4 for MiCo, while Arabesque 
%takes 334 seconds. For Motifs size 3, G-Tries takes 50 seconds 
%while Arabesque takes 318 seconds. For FSM with support $220$, maximum size $7$, and the CiteSeer dataset, GrowStore takes $11$ minutes, while 
%Arabesque takes $11$ seconds time.
Even when running on a single thread, \name\ has comparable performance or is even faster than 
most centralized implementations. 
The only exception is GRAMI, which, as discussed, uses 
extra application-specific TLP optimizations 
and solves a simpler problem by not outputting all frequent embeddings. The performance advantage 
of GRAMI disappears when we require discovery of the actual embeddings as we see from the running time
of VFLib. %When we require  
% while \name\ uses a vanilla FSM algorithm. 
% As we noted earlier, unlike \name\, and also GrowStore, 
% GRAMI does not enumerate or return the set of
% embeddings; it uses the TLP paradigm to output only the frequent
%patterns. 
% For this reason, while for some 
% configurations GRAMI and \name\ achieve similar performance even in a
% single worker setting, sometimes GRAMI can significantly outperform \name.
% Nevertheless, beyond addressing a different variant (i.e., without
% embeddings), the TLP-specific optimizations of GRAMI cannot be readily
% extended to a distributed setting, as we have shown.

These results are a clear indicator of the efficiency 
of \name. Despite being built with a
generic framework running over Hadoop (and Java), \name\ can achieve
performance comparable, and sometimes even superior, to application-specific implementations.
The main contributing factor, as we show later in this section, is that the user-defined 
functions in \name\ consume an insignificant amount of CPU time. %contrary to the Graph processing systems analyzed 
%in ~\cite{scalability_cost}. 
The user-defined functions steer the exploration process through a high-level API.
The API abstracts away the details of the actual exploration process,
which are under the control of \name\ and can thus be efficiently implemented. 
This is in stark contrast to the graph processing systems analyzed in~\cite{scalability_cost}, where 
the user-defined functions perform
the bulk of the computation, leaving little room for system-level optimizations. %of these systems.

\begin{table*}[!ht]
\centering
\iftoggle{singlecolumn}{%
  \footnotesize
}{%
}
\begin{tabular}{@{}llrrrrr@{}}
\toprule
\iftoggle{singlecolumn}{%
		\textbf{Application - Graph} & \textbf{Centralized Baseline} & \multicolumn{5}{c}{\textbf{Arabesque - Num. Servers (32 threads)}}\\
}{%
		\multirow{2}{*}{\textbf{Application - Graph}} & \multirow{2}{6em}{\textbf{Centralized Baseline}} & \multicolumn{5}{c}{\textbf{Arabesque - Num. Servers (32 threads)}}\\
}
		\cmidrule{3-7}  
& &  \textbf{1} 	& \textbf{5} 	& \textbf{10} 	& \textbf{15} 	& \textbf{20}\\
\toprule
Motifs - MiCo 	& G-Tries: 8,664s & 328s	& 74s	& 41s	& 31s	& 25s\\
FSM - CiteSeer	& Grami+VFLib: 13s + 1,800s & 431s	& 105s   & 65s	& 52s 	& 41s\\
Cliques - MiCo	& Mace: 14,901s & 1,185s	& 272s	& 140s	& 91s	& 70s\\
Motifs - Youtube & G-Tries: {\em Fail} & 8,995s & 2,218s & 1,167s & 900s  & 709s \\
FSM - Patents	& Grami+VFLib: 1,147s + $>$19h & 548s & 186s   & 132s	& 102s 	& 88s\\
 \bottomrule
%Cliques - G4 &	& & 874s & 310s	& 163s	& 130s	& 105s\\
%\hline
%Cliques - G5 &	& & & & & & 411s \\
%\hline
\end{tabular}
\caption{Scalability of Arabesque - For FSM - CiteSeer, 
the chosen support was 220 and the search was terminated at embedding size 7, while for FSM - Patents the chosen 
support was 24k with no maximum embedding size.}
\label{tab:scalability}
\end{table*}

\begin{figure}[!ht]
\centering
\begin{tikzpicture}
    \pgfplotstableread[col sep=comma]{speedup5-motifs-ezip-mico.data}\speedupmotifs
    \pgfplotstableread[col sep=comma]{speedup5-gspan-ezip-citeseer.data}\speedupgspan
    \pgfplotstableread[col sep=comma]{speedup5-cliques-ezip-mico.data}\speedupcliques
    \pgfplotstableread[col sep=comma]{speedup5-motifs-ezip-youtube.data}\speedupmotifsyt
    \pgfplotstableread[col sep=comma]{speedup5-gspan-ezip-patents.data}\speedupgspanpatents
    %\pgfplotstableread[col sep=comma]{results/speedup5-cliques-ezip-youtube.data}\speedupcliquesyt
    %\pgfplotstableread[col sep=comma]{results/speedup5-cliques-ezip-youtube.data}\speedupcliquesyt
    \pgfplotstableread[col sep=comma]{
	nodes, speedup
	5, 1
	10, 2
	15, 3
	20, 4
    }\speedupideal
    \begin{axis}[%
	width=\linewidth,
	height=1.65in,
	legend style={at={(0.5,1.05)}, anchor=south, legend columns=3, font=\scriptsize},
	bar width=1.5em,
	xmin=5,
	ymin=1,
	ymax=4,
	xmax=20,
	ylabel=Speedup,
	xlabel=Number of nodes (32 threads),
	x label style={font=\scriptsize},
	y label style={font=\scriptsize},
	%symbolic x coords={SSSP, TC, PR, SC},
	%xticklabels from table={\numembeddings}{superstep},
	y tick label style={align=center,font=\scriptsize},
	x tick label style={align=center,font=\scriptsize,text width=4em},
	%nodes near coords,
	every node near coord/.append style={fill=white, fill opacity=0.7, anchor=south,font=\tiny, yshift=-0.15em},
	enlarge x limits=0.075,
	xtick=data,
	mark options={solid}
	]
    \addplot[color=black, mark=*]
	table[x=nodes, y=speedup]{\speedupideal}; 
    \addplot[color=pastelblue, mark=diamond*]
	table[x=nodes, y=speedup]{\speedupmotifs}; 
    \addplot[color=pastelred, mark=triangle*]
	table[x=nodes, y=speedup]{\speedupgspan}; 
    \addplot[color=pastelgreen, mark=square*]
	table[x=nodes, y=speedup]{\speedupcliques}; 
    \addplot[color=pastelblue, mark=diamond*, dashed]
	table[x=nodes, y=speedup]{\speedupmotifsyt}; 
    \addplot[color=pastelred, mark=triangle*, dashed]
	table[x=nodes, y=speedup]{\speedupgspanpatents}; 
    %\addplot[color=pastelgreen, mark=square*, dashed]
	%table[x=nodes, y=speedup]{\speedupcliquesyt}; 
    \legend{Ideal, Motifs (MiCo), FSM (CiteSeer), Cliques (MiCo), 
	    Motifs (Youtube), FSM (Patents), Cliques (Youtube),
	    }
    \end{axis}
\end{tikzpicture}
    \caption{Scalability of Arabesque:Speedup relative to the configuration with 5 servers.}
    \label{fig:scalability-speedup}
\end{figure}

\spara{Scalability:}
The TLE approach of Arabesque makes it easy to scale the system to a large number of servers.
We ran all three algorithms on datasets that allow computation to 
terminate in a feasible time on a single server while leaving sufficient 
work to be executed on a large cluster.
Table~\ref{tab:scalability} reports execution times, thus excluding setup and shutdown times, with a growing number of servers and, for reference, the  execution time of the 
centralized baselines.
Figure~\ref{fig:scalability-speedup} illustrates the same results in terms of speedup, comparing
distributed settings among each other.
% For a reference, the centralized baselines have the following performance: Motifs (G-Tries) on MiCo takes $8,664s$,  
% Cliques (Mace) on MiCo takes $14,901s$, and Motifs (G-Tries) on Youtube  crashes. 
% As previously discussed, there is no algorithm that represents an exact comparison to what 
% our \name\ implementation computes, and GRAMI can be up to one order of magnitudes faster
% than single-threaded \name.

The results show that \name\ scales to a large number of servers.
Different applications show different scalability factors.
In general, applications generating more intermediate state and more patterns
scale less.
For example, FSM scales less because it generates many patterns and 
transmits a large number of embeddings that are discarded
by the aggregation filter at the beginning of the next step, when aggregated metrics become available.
By contrast, in Cliques we have a single pattern at each step (a clique) and fewer embeddings.
The behavior of Motifs is in between. This trend is due to the characteristics of \ezips.
\name\ constructs one \ezip\ per pattern, and thus as the number of
patterns grows, so does the number of \ezips. Considering the same number of 
embeddings, the more \ezips\ they are split into, the smaller the potential for compression.
%factors as the embeddings now have to be split in a larger number of ``buckets''. 
Furthermore, since \ezips\ are broadcast, the communication cost of transmitting embeddings
increases as more servers are added,
and the per-server computational cost of de-serializing and filtering out embeddings remains constant.

Despite these scalability limitations, \ezips\ typically remain advantageous.
We have tested the scalability of the system without \ezips\ and the slope of the speedup 
is closer to the ideal speedup than in Figure~\ref{fig:scalability-speedup}. 
Nevertheless, this better scalability is greatly outweighed by a significant increase
in the overall execution time, as we will see shortly.

%Motifs scale better than FSM, due to the smaller number of patterns, and thus better potential for
%compression using the \Ezips. FSM, can generate a large number of 
%patterns, not frequent though, which affect the efficiency. \ezips\ need to be broadcast to all the nodes before the next
%expansion step can run, so the broadcast phase can become more expensive as the size of the cluster grows.
%We have tested the scalability of the system without \ezips\ and the slope of the speedup is better compared 
%to when using \ezips, closer to the ideal speedup. 
%%The main bottleneck preventing ideal scalability is that the set of
%Nevertheless, this performance penalty is outweighed by the gain in terms of lower
%exploration load per node, since \ezips\ are not very large, but it
%prevents \name\ to achieve ideal scalability, especially with sparse
%graphs with less opportunities for compression (e.g., see Youtube
%results in Figure~\ref{fig:scalability-speedup}). 
%To show case the scalability of the system, we plot in figure~\ref{fig:scalability-speedup-noezip}, the scalability without \ezips. As we can see from the figure, the scalability is very close to ideal.

\spara{\EZips:}
% Compression: aggregating embeddings into large groups is key to
% achieve good compression rates. If we compressed each embedding
% separately we would achieve negligible compression.
\name\ introduced \ezips\
to compress embeddings and make it possible to 
mine large graphs that generate trillions of embeddings (see Section~\ref{subsec:lossy-trees}).
%\name\ cannot run on the large graphs of Table~\ref{tab:big-graphs} without \ezips\ so we study the effect of 
%this optimization using smaller datasets.
Figure~\ref{fig:space-optim} shows the efficacy of \ezips\ by comparing the space required to store intermediate embeddings with and without \ezips\ 
at different exploration steps.
We report the sizes of the structures listing all the embeddings at the end of each superstep. 
This represents the minimum space required for the embedding: after de-serialization one can expect 
this value to grow much larger.
The results clearly show that \ezips\ can reduce memory cost by several orders of magnitude even in relatively 
small graphs such as CiteSeer. 
%The actual compression potential of \ezips, however, depends on structure of the graph.
%The gain of \ezips\ is also higher on denser graphs like CiteSeer. 

% In \name, by contrast, \ezips\ are 
%broadcast to all workers, which are much fewer than the border vertices.
%Aggregating multiple embeddings together by pattern enables us to use 
%efficient compression techniques.

%We should note here, that \ezips\ consume a certain amount of memory 
%irrespective of the cluster size, since all the machines need to share
%the exact same \ezips. The case is opposite when directly using embeddings, where the required memory decreases as we increase the size of the cluster. 
%Nevertheless, the benefit of using \ezips\ is significant, and we leave
%it as future work on how to decouple this. 
%\todo{I do not know what ``decouple this'' means}

\begin{figure}[!ht]
\centering
\begin{tikzpicture}
    \pgfplotstableread[col sep=comma]{size-serialized-ezip-machine.data}\sizeezipciteseermachine
    \pgfplotstableread[col sep=comma]{size-serialized-noezip-total.data}\sizenoezipciteseertotal
    \pgfplotstableread[col sep=comma]{size-serialized-ezip-youtube-machine.data}\sizeezipyoutubemachine
    \pgfplotstableread[col sep=comma]{size-serialized-noezip-youtube-total.data}\sizenoezipyoutubetotal
    \begin{semilogyaxis}[%
	width=\linewidth,
	height=1.65in,
	legend style={at={(0.5,1.05)}, anchor=south, legend columns=2, font=\scriptsize},
	bar width=1.5em,
	xmin=1,
	ymax=100000,
	xmax=6,
	ytick={0.01, 0.1, 1, 10, 100, 1000, 10000, 100000},
	yticklabels={0.01, , 1, , 100, , 10000, },
	ylabel=Serialized size of embeddings (MB),
	xlabel=Exploration depth,
	log ticks with fixed point,
	%y tick label style={/pgf/number format/1000 sep=\,, align=center, font=\scriptsize},
	x label style={font=\scriptsize},
	y label style={font=\scriptsize},
	%symbolic x coords={SSSP, TC, PR, SC},
	%xticklabels from table={\numembeddings}{superstep},
	y tick label style={align=center,font=\scriptsize},
	x tick label style={align=center,font=\scriptsize,text width=4em},
	%nodes near coords,
	every node near coord/.append style={fill=white, fill opacity=0.7, anchor=south,font=\tiny, yshift=-0.15em},
	enlarge x limits=0.075,
	xtick=data
	]
    \addplot[color=pastelblue, mark=*]
	table[x=depth, y=mb]{\sizeezipciteseermachine}; 
    %\addplot[color=pastelred, mark=*]
	%table[x=depth, y=mb]{\sizeeziptotal}; 
%    \addplot[color=pastelgreen, mark=*]
%	table[x=depth, y=mb]{\sizenoezipmachine}; 
    \addplot[color=pastelred, mark=square*]
	table[x=depth, y=mb]{\sizenoezipciteseertotal}; 
    \addplot[color=pastelgreen, mark=triangle*]
	table[x=depth, y=mb]{\sizeezipyoutubemachine}; 
    \addplot[color=black, mark=diamond*]
	table[x=depth, y=mb]{\sizenoezipyoutubetotal}; 
    \legend{\Ezips\ (CiteSeer), No \ezips\ (CiteSeer), \Ezips\ (Youtube), No \ezips\ (Youtube)}
    %\legend{\Ezips\ (CiteSeer), No \ezips\ (CiteSeer)}
    \end{semilogyaxis}
\end{tikzpicture}
    \caption{Compression effect of \ezips\ processed at each depth. 
    Data captured from executions of FSM on
CiteSeer (S=220, MS=7) and Youtube (S=250k).  Includes a minor fix of the corresponding plot in the short version of this paper.}
	\label{fig:space-optim}
\end{figure}

As with any compression technique, \ezips\ trade space for computational costs, 
so one might wonder whether using \ezips\ results in longer execution times.
Indeed, the opposite holds: \ezips\ significantly speed up computation, 
especially when running many exploration steps.
Figure~\ref{fig:remove-optim} reports the normalized execution time slowdown when \ezips\ are disabled, compared to the results of Table~\ref{tab:scalability}. Removing \ezips\ can increase execution time 
up to $4$ times in these experiments. %  factor between $16\%$ and $418\%$.
A more compact representation of the embeddings, in fact, results in less network overhead
to transmit embeddings across servers, lower serialization costs, and less overhead 
due to garbage collection.
We have observed, however, that in the first exploration steps
with very large and sparse graphs, the overhead of constructing ODAGs
outweights the cost of sending individual 
embeddings, because ODAGs achieve very little compression. 
In such cases, we can revert to using embedding lists.

\begin{figure}[!ht]
\centering
\begin{tikzpicture}
    \pgfplotstableread[col sep=comma]{optimizations-noezip.data}\optimizationsnoezip
    \begin{axis}[%
	width=\linewidth,
	height=1.65in,
	ybar,
	ymax=5,
	ymin=0,
	bar width=1.5em,
	ylabel=Relative slowdown factor,
	%xlabel=Number of machines,
	x label style={font=\scriptsize},
	y label style={font=\scriptsize},
	symbolic x coords={Motifs MiCo, FSM CiteSeer, Cliques MiCo, Motifs Youtube, FSM Patents},
	%xticklabels from table={\scalabilitytle}{nummachines},
	y tick label style={align=center,font=\scriptsize},
	x tick label style={anchor=north,align=center,font=\scriptsize,text width=3.5em},
	nodes near coords,
	every node near coord/.append style={anchor=south,font=\scriptsize, yshift=-0.15em},
	enlarge x limits=0.2,
	xtick=data]
    \addplot[fill=pastelblue, area legend]
	table[x=exec, y=normalizedtime]{\optimizationsnoezip}; 
    % \draw [black, dashed] ({rel axis cs:0,0}|-{axis cs:Motifs MiCo,1}) -- ({rel axis cs:1,0}|-{axis cs:FSM Patents,1});
    % \addlegendimage{line legend, black, sharp plot, dashed}
    % \legend{No \Ezips, W/ optimizations}
    \end{axis}
\end{tikzpicture}
\caption{Slowdown factor when storing full embedding lists compared to the results of Table~\ref{tab:scalability} with 20 servers.}
\label{fig:remove-optim}
\end{figure}

\begin{figure}[!ht]
\centering
\begin{tikzpicture}
    \pgfplotstableread[col sep=comma]{optimizations-notwolevel.data}\optimizationsnotwolevel
    \begin{axis}[%
	width=\linewidth,
	height=1.65in,
	legend style={anchor=south, at={(0.5, 1.05)}, legend columns=-1, font=\scriptsize, /tikz/every even column/.append style={column sep=0.1cm}},
	ybar,
	ymax=52,
	ymin=0,
	bar width=1.5em,
	ylabel=Relative slowdown factor,
	%xlabel=Number of machines,
	x label style={font=\scriptsize},
	y label style={font=\scriptsize},
	symbolic x coords={{Motifs MiCo (MS=3)}, {Motifs Patents (MS=3)}, {FSM CiteSeer (S=220 MS=6)}, {FSM Patents (S=30k)}},
	%xticklabels from table={\scalabilitytle}{nummachines},
	y tick label style={align=center,font=\scriptsize},
	x tick label style={align=center,font=\scriptsize,text width=3.5em},
	nodes near coords,
	every node near coord/.append style={anchor=south,font=\scriptsize, yshift=-0.15em},
	enlarge x limits=0.2,
	xtick=data]
    \addplot[fill=pastelgreen, area legend]
	table[x=exec, y=normalizedtime]{\optimizationsnotwolevel}; 
    % \draw [black, dashed] ({rel axis cs:0,0}|-{axis cs:{Motifs G2 (MS=3)},1}) -- ({rel axis cs:1,0}|-{axis cs:{FSM G1 (S=300)},1});
    % \addlegendimage{line legend, black, sharp plot, dashed}
    % \legend{No 2-level agg., W/ optimizations}
    \end{axis}
\end{tikzpicture}
\caption{Slowdown factor when removing two-level pattern aggregation (not applicable to Cliques).}
\label{fig:remove-optim2}
\end{figure}

\begin{table*}[!ht]
\centering
\iftoggle{singlecolumn}{%
  \iftoggle{twoleveltiny}{%
  \tiny%
  }{
  \scriptsize%
  }%
}{%
  \small%
}
%\begin{small}
\iftoggle{singlecolumn}{%
\begin{tabular}{@{}p{3.5em}rrrrrr@{}}
}{%
\begin{tabular}{@{}lrrrrrr@{}}
}
\toprule
& \textbf{Motifs-MiCo} & \textbf{FSM-CiteSeer} & \textbf{FSM-CiteSeer}& \textbf{Motifs-MiCo}& \textbf{FSM-Patents}& 
\textbf{Motifs-Youtube}\\
& \textbf{MS=3} & \textbf{S=300} & \textbf{S=220,MS=7} & \textbf{MS=4} & \textbf{S=24k} & \textbf{MS=4}\\

\toprule
Embeddings 
\iftoggle{twolevelapprox}{%
& $6.6 * 10^7$ & $2.9 * 10^6$ & $1.7 * 10^9$ & $1.1 * 10^{10}$ & $1.9 * 10^9$ & $2.2 * 10^{11}$ \\
}{%
& 66,081,419 & 2,890,024 & 1,680,983,703 & 10,957,439,024 & 1,910,611,704 & 218,909,854,429 \\
}
\iftoggle{singlecolumn}{%
\midrule
}{}
Quick patterns 
& 3 & 116 & 1,433 & 21 & 1,800 & 21 \\
\iftoggle{singlecolumn}{%
\midrule
}{}
Canonical patterns
& 2 & 28 & 97 & 6 & 1,348 & 6 \\
\iftoggle{singlecolumn}{%
\midrule
}{}
\emph{Reduction factor}
\iftoggle{twolevelapprox}{%
& \emph{$2.2 * 10^7$}x & \emph{$2.5 * 10^4$}x & \emph{$1.1*10^6$}x & \emph{$5.2 * 10^8$}x & \emph{$1.1 * 10^6$}x &
\emph{$1*10^{10}$}x \\
}{%
& \emph{22,027,140}x & \emph{24,914}x & \emph{1,173,052}x & \emph{521,782,810}x & \emph{1,061,451}x &
\emph{10,424,278,782}x \\
}
\bottomrule
\end{tabular}
%\end{small}
\caption{Effect of two-level pattern aggregation. 
%The expensive canonicality
%check must be executed for each embedding if we don't use two-level
%aggregation. With the optimization, the check is executed for each quick
%pattern, resulting in the reported reduction factor in the number of canonicality checks. The number of
%actual canonical patterns is also shown. 
%We omit Cliques since it does
%not use aggregation. 
The results refer to the deepest exploration level.}
\label{tab:2-level}
\end{table*}

\spara{Two-Level Pattern Aggregation:}
% Graph mining tasks suffer from the exponential growth of state
% (embeddings) so we need to minimize the cost of analyzing each embedding.
In Section~\ref{sec:optim}, we introduced our novel two-level
pattern-key aggregation technique to reduce the number of pattern canonicality checks,
i.e., graph isomorphism, run by the system.
Table~\ref{tab:2-level} compares the number of checks without the optimization,
which is equal to the number of embeddings, and with the optimization, which
is equal to the number of quick patterns.
The results show a reduction of several orders of magnitude using 
the optimization. For instance, for Motifs with the Youtube graph the optimization allows \name\
to run graph isomorphism only 21 times instead of $218$ billion times.
The number of quick patterns is also very close to the number of actual canonical patterns, thus minimizing 
the required number of graph isomorphism checks. % \name\ runs is almost minimal.

The actual savings in terms of execution time depend not only on how often we compute graph isomorphism
but also on the cost of the computation itself, which in turn depends on the complexity of 
the pattern. 
In order to take this into account,
Figure~\ref{fig:remove-optim2} reports the relative execution time slowdown when two-level aggregation is disabled.
We consider smaller instances than in Table~\ref{tab:scalability} to keep execution times manageable;
the slowdown grows with larger instances.
The results show that without the optimization the system can be more than one order of magnitude slower,
since it spends most of its CPU cycles on computing graph isomorphism.

\begin{figure*}[!ht]
    \centering
    \subfloat[][{\raggedright FSM CiteSeer (S=220,MS=7)}]{
	\centering
	\begin{tikzpicture}
	    \pie[radius=1.3, color={pastelred, pastelgreen, pastelblue, pastelpurple, pastelgrey}]{5/R, 8/G, 11/C, 26/P, 50/W}
	\end{tikzpicture}
    }
    \qquad
    \subfloat[][Motifs MiCo (MS=4)]{%
	\centering
	\begin{tikzpicture}
	    \pie[radius=1.3, color={pastelred, pastelgreen, pastelblue, pastelpurple, pastelgrey}]{1/R (1\%), 18/G, 18/C, 15/P, 48/W}
	\end{tikzpicture}
    }
    \qquad
    \subfloat[][Cliques MiCo (MS=5)]{%
	\centering
	\begin{tikzpicture}
	    \pie[radius=1.3, color={pastelred, pastelgreen, pastelblue, pastelgrey}]{4/R, 59/G, 12/C, 25/W}
	\end{tikzpicture}
    }
\caption{CPU utilization breakdown during the superstep preceding the last one. W = Writing embeddings (\ezip~creation, serialization, transfer);
R = Reading embeddings (\ezip~extraction); G = Generating new candidates; C = Embedding
canonicality checking; P = Pattern aggregation.}
\label{fig:breakdownG1}
\end{figure*}

\spara{Execution time breakdown:}
The CPU utilization breakdown of 
Figure~\ref{fig:breakdownG1} shows that storing, sharing, and extracting embeddings occupies
a predominant fraction of CPU utilization.
Embedding canonicality and pattern canonicality checking still take a significant fraction of CPU cycles
even after our optimizations, showing that executing these checks efficiently is critical. 
Note that Cliques does not use pattern aggregation.
Interestingly, the user-defined functions consume an insignificant amount of CPU,
although their logic is fundamental in determining the exploration process and thus the overall system load.

\subsection{Large Graphs with \name} % (fold)
\label{sub:large_graphs}

We complete our evaluation by running \name\ on large graphs and checking 
the limits in terms of required resources. % when using commodity servers.  %required and it's limits given the 
We use the SN and the Instagram graph for this evaluation. SN is both a large and dense
graph with an average degree of 79, while Instagram has close to one billion edges but is significantly 
less dense compared to SN. For these two graphs
we don't have real world labels, so we focus on graph mining problems that look for structural patterns,
such as Motifs and Cliques, rather than more inherently label-dependent problems such as FSM.

%current implementation.
%Twitter?
% The scalability of Arabesque makes it possible to process datasets with tens of millions of 
% edges in few a minutes despite the explosion of the intermediate state, as shown in Table~\ref{tab:big-graphs}.

\begin{table}[!t]
\centering
\begin{small}
\begin{tabular}{@{}lrrr@{}}

\toprule
\textbf{Application} & \textbf{Time} & \textbf{Memory} & \textbf{Embeddings} \\
\toprule
Motifs-SN (MS=4) & 6h 18m & 110 GB & $8.4 *10^{12}$ \\
Cliques-SN (MS=5) & 29m & 50 GB & $3 * 10^{10}$ \\
Motifs-Inst (MS=3) & 10h 45m & 140 GB&  $5*10^{12}$ \\
% Cliques-Inst(MS=5) &  &  &  \\
\bottomrule

\end{tabular}
\end{small}
\caption{Execution details with large graphs and 20 servers.}
\label{tab:big-graphs}
\end{table}

In Table~\ref{tab:big-graphs}, we report the running time, the maximum memory used and the number of interesting embeddings 
that \name\ processed. 
For Motifs-SN, the application analyzed $8.4$ trillion embeddings and ran for 6 hours 18 minutes. Cliques, as expected, posed a smaller load on the system, and it ran in half an hour, analyzing 30 billion embeddings.
Instagram is a large and sparse graph, so \ezips\ do not have high compression 
efficiency in the first exploration steps.
In fact, we could not run Motifs with \ezips\ and MS = 4 because it exceeds 
the memory resources of our servers (256 GB). 
Table~\ref{tab:big-graphs} thus reports the results for MS = 3 using regular embedding lists.
%We consider developing out-of-core techniques to store the intermediate state on disk as impractical, 
%This wouldn't allow us to use \ezips\, and more importantly it would significantly impact the running time and thus make it impractical. %we have rulled this  This would be impractical, since the running time would be significantly affected, and 
% Thus we consider that Motifs for size 4 don't fit in machines with 256G of memory. 
%Instead, in Table~\ref{tab:big-graphs}, we report the numbers for MS=3.

Overall, the results show that even with the commodity servers that we utilize, \name\ can process 
graphs that are dense and have hundreds of millions of edges and tens of million of vertices. 

% subsection large_graphs (end)
%For instance, Motifs computation without two-level aggregation takes $460$ seconds while with 
%two-level aggregation it runs in $48.5$ seconds, an order of magnitude faster.
%This is the single most important optimization for a practical
%distributed graph mining system, and could be applicable to all existing
%graph mining algorithms. 

% \input{discussion.tex}

%!TEX root = main.tex

\section{Related Work}

Over the last decades graph mining has emerged as an important research
topic. Here
we discuss the state-of-the-art for the graph mining problems tackled in
this paper.
  
\spara{Centralized Algorithms:} 
Among the most efficient methods for frequent subgraph mining is 
gSpan~\cite{gspan}. However, gSpan is designed for mining multiple input graphs
and not a single large graph.
When there are multiple graphs, the frequency of a pattern is
simply the number of input graphs that contain it, so finding only one
instance of a pattern in a graph is sufficient to make this determination. 
If we instead have a single input graph, we have to find multiple instances
in the same graph, and this makes the problem more complex.
One of the first algorithms to mine patterns from a single graph was
proposed in \cite{kuramochi2005finding}. It uses a level-wise edge
growth extension strategy, but uses an expensive anti-monotonic
definition of support based on the maximal independent set to find
edge-disjoint embeddings.
For the single large input graph setting GRAMI~\cite{grami} is
a recent approach that is very effective. The motif problem was
introduced in~\cite{graphlets}. The work in \cite{gtries} proposes an
effective approach for storing and finding motif frequencies.
Listing all the maximal cliques is a well studied problem, with the
Bron-Kerbosch algorithm~\cite{bron-kerbosch} among the most efficient
ones in practice. See~\cite{eppstein2011listing} for a recent method
that can handle large sparse real-world graphs.

\spara{Distributed and Parallel Approaches:} 
Recently there have been
several papers on both parallel and distributed FSM using MPI or the
MapReduce
framework~\cite{wang2004parallel,di2006dynamic,lu2013efficiently,hill2012iterative,lin2014large,bhuiyan2015iterative}
as well as GPUs~\cite{2014-bigmine}.
However, all these methods focus on the case of multiple input
graphs, which is simpler as we have previously discussed.
Some existing work targets graph matching, a subset of graph mining 
problems: given a query $q$, it finds all its
embeddings in a distributed manner.
The work in~\cite{shao2014parallel} uses a Pregel-based approach for graph matching
in a single graph, while~\cite{zhao2012sahad} proposes a Hadoop-based
solution. 
%There is only one recent work that tackles the problem
%of distributed FSM in a single graph using the MapReduce approach
%\cite{shahrivari2015distributed}. 
%\todo{do they do TLV or TLP? check}.
 For motifs, \cite{aparicio2014parallel} proposes a multicore parallel
approach, while \cite{slota2014complex} develops methods for
approximate motif counting on a tightly coupled HPC
system using MPI. An early work on parallel maximal clique enumeration is~\cite{dahlhaus88}, 
which proposed a parallel CREW-PRAM
implementation. A more recent parallel algorithm on the Cray XT4
machine was proposed in \cite{schmidt2009scalable}. The work in
\cite{cheng2012fast} uses an MPI-based approach, whereas  
MapReduce based implementations are given in
\cite{wu2009distributed,ashraf_2013}. The work in \cite{bahmani2012densest} focuses
on the related problem of finding dense subgraphs using MapReduce. 

\section{Conclusions}
%\name\ was designed from scratch as a distributed graph mining framework. 
In this paper, we showed that distributing graph mining tasks is far from trivial.
Focusing on optimizing centralized algorithms
and then considering how to convert them to distributed solutions using a TLV or TLP approach
can result to scalability issues. 
Distributing these tasks requires a mental shift on how to approach these
problems. 

\name\ represents a novel approach to graph mining problems. It is a system designed 
from scratch as a distributed graph mining framework.
\name\ focuses simultaneously on scalability and on providing a user-friendly simple programming API
that allows non-experts to build graph mining workloads. This follows the spirit of
the MapReduce and Pregel frameworks that democratized the processing and analysis of big data.
We demonstrated that \name's simple programming API can be used to build highly efficient distributed 
graph mining solutions
that scale and perform very well. %Additionally, by building on top of Apache Giraph, we enable
%using a single system to cover both graph processing and graph mining tasks. 

\section*{Acknowledgments}
We would like to thank Landon Cox for shepherding the paper, the SOSP reviewers 
for their valuable feedback, Ehab Abdelhamid for the GRAMI implementation, Nilothpal Talukder 
for some of the datasets, Tommi Junttila for changing the license of his jbliss library, % to ease its integration in \name.
 Lori Paniak and Kareem El Gebaly for the cluster setup.
Carlos  H. C. Teixeira would like to thank CNPq, Fapemig and Inweb for the travel support to attend the conference.

\bibliographystyle{acm}
\bibliography{references}

\begin{thebibliography}{10}

\bibitem{graphminingbook}
{\sc Aggarwal, C.~C., and Wang, H.}
\newblock {\em Managing and mining graph data}.
\newblock Springer, 2010.

\bibitem{repsets}
{\sc Anchuri, P., Zaki, M.~J., Barkol, O., Golan, S., and Shamy, M.}
\newblock Approximate graph mining with label costs.
\newblock In {\em Proceedings of the ACM SIGKDD International Conference on
  Knowledge Discovery and Data Mining\/} (2013).

\bibitem{giraph}
{\sc Avery, C.}
\newblock Giraph: Large-scale graph processing infrastructure on {H}adoop.
\newblock {\em Proceedings of the Hadoop Summit\/} (2011).

\bibitem{bahmani2012densest}
{\sc Bahmani, B., Kumar, R., and Vassilvitskii, S.}
\newblock Densest subgraph in streaming and {M}ap{R}educe.
\newblock {\em Proceedings of the VLDB Endowment 5}, 5 (2012).

\bibitem{bhuiyan2015iterative}
{\sc Bhuiyan, M.~A., and Hasan, M.}
\newblock An iterative {M}ap{R}educe based frequent subgraph mining algorithm.
\newblock {\em IEEE Transactions on Knowledge and Data Engineering 27}, 3
  (March 2015).

\bibitem{pagerank}
{\sc Brin, S., and Page, L.}
\newblock The anatomy of a large-scale hypertextual web search engine.
\newblock {\em Computer networks and ISDN systems 30}, 1 (1998).

\bibitem{imagesupport}
{\sc Bringmann, B., and Nijssen, S.}
\newblock What is frequent in a single graph?
\newblock In {\em Proceedings of the Pacific-Asia Conference on Advances in
  Knowledge Discovery and Data Mining\/} (2008), Springer-Verlag.

\bibitem{bron-kerbosch}
{\sc Bron, C., and Kerbosch, J.}
\newblock Algorithm 457: Finding all cliques of an undirected graph.
\newblock {\em Communications of the ACM 16}, 9 (1973).

\bibitem{cheng2012fast}
{\sc Cheng, J., Zhu, L., Ke, Y., and Chu, S.}
\newblock Fast algorithms for maximal clique enumeration with limited memory.
\newblock In {\em Proceedings of the ACM SIGKDD international conference on
  Knowledge discovery and data mining\/} (2012).

\bibitem{youtube_data}
{\sc Cheng, X., Dale, C., and Liu, J.}
\newblock Dataset for ``statistics and social network of youtube videos''.
\newblock \url{http://netsg.cs.sfu.ca/youtubedata/}.

\bibitem{Cordella2004}
{\sc Cordella, L., Foggia, P., Sansone, C., and Vento, M.}
\newblock A (sub)graph isomorphism algorithm for matching large graphs.
\newblock {\em IEEE Transactions on Pattern Analysis and Machine Intelligence
  26}, 10 (2004).

\bibitem{dahlhaus88}
{\sc Dahlhaus, E., and Karpinski, M.}
\newblock A fast parallel algorithm for computing all maximal cliques in a
  graph and the related problems.
\newblock In {\em Proceedings of SWAT}, Lecture Notes in Computer Science.
  Springer, 1988.

\bibitem{di2006dynamic}
{\sc Di~Fatta, G., and Berthold, M.~R.}
\newblock Dynamic load balancing for the distributed mining of molecular
  structures.
\newblock {\em IEEE Transactions on Parallel and Distributed Systems 17}, 8
  (2006).

\bibitem{grami}
{\sc Elseidy, M., Abdelhamid, E., Skiadopoulos, S., and Kalnis, P.}
\newblock Grami: Frequent subgraph and pattern mining in a single large graph.
\newblock {\em Proceedings of the VLDB Endowment\/} (2014).

\bibitem{eppstein2011listing}
{\sc Eppstein, D., and Strash, D.}
\newblock Listing all maximal cliques in large sparse real-world graphs.
\newblock In {\em Experimental Algorithms}. Springer, 2011.

\bibitem{garey2002computers}
{\sc Garey, M.~R., and Johnson, D.~S.}
\newblock {\em Computers and intractability}, vol.~29.
\newblock WH Freeman, 2002.

\bibitem{powergraph}
{\sc Gonzalez, J.~E., Low, Y., Gu, H., Bickson, D., and Guestrin, C.}
\newblock Power{G}raph: Distributed graph-parallel computation on natural
  graphs.
\newblock In {\em Proceedings of the USENIX Symposium on Operating Systems
  Design and Implementation\/} (2012).

\bibitem{patent_data}
{\sc H., H.~B., Jaffe, A.~B., and Trajtenberg, M.}
\newblock The {NBER} patent citation data file: Lessons, insights and
  methodological tools.
\newblock \url{http://www.nber.org/patents/}, 2001.

\bibitem{hill2012iterative}
{\sc Hill, S., Srichandan, B., and Sunderraman, R.}
\newblock An iterative {M}ap{R}educe approach to frequent subgraph mining in
  biological datasets.
\newblock In {\em Proceedings of the ACM Conference on Bioinformatics,
  Computational Biology and Biomedicine\/} (2012).

\bibitem{bliss}
{\sc Junttila, T., and Kaski, P.}
\newblock Engineering an efficient canonical labeling tool for large and sparse
  graphs.
\newblock In {\em Proceedings of the SIAM Workshop on Algorithm Engineering and
  Experiments\/} (2007).

\bibitem{2014-bigmine}
{\sc Kessl, R., Talukder, N., Anchuri, P., and Zaki, M.~J.}
\newblock Parallel graph mining with {GPUs}.
\newblock In {\em Proceedings of the International Workshop on Big Data,
  Streams and Heterogeneous Source Mining: Algorithms, Systems, Programming
  Models and Applications\/} (2014).

\bibitem{kuramochi2005finding}
{\sc Kuramochi, M., and Karypis, G.}
\newblock Finding frequent patterns in a large sparse graph.
\newblock {\em Data mining and knowledge discovery 11}, 3 (2005).

\bibitem{lin2014large}
{\sc Lin, W., Xiao, X., and Ghinita, G.}
\newblock Large-scale frequent subgraph mining in {M}ap{R}educe.
\newblock In {\em Proceedings of the IEEE International Conference on Data
  Engineering\/} (2014).

\bibitem{lu2013efficiently}
{\sc Lu, W., Chen, G., Tung, A.~K., and Zhao, F.}
\newblock Efficiently extracting frequent subgraphs using {M}ap{R}educe.
\newblock In {\em Proceedings of the IEEE International Conference on Big
  Data\/} (2013).

\bibitem{pregel}
{\sc Malewicz, G., Austern, M.~H., Bik, A.~J., Dehnert, J.~C., Horn, I.,
  Leiser, N., and Czajkowski, G.}
\newblock Pregel: A system for large-scale graph processing.
\newblock In {\em Proceedings of the ACM SIGMOD International Conference on
  Management of Data\/} (2010).

\bibitem{tlv-survey}
{\sc McCune, R.~R., Weninger, T., and Madey, G.}
\newblock Thinking like a vertex: A survey of vertex-centric frameworks for
  large-scale distributed graph processing.
\newblock {\em arXiv:1507.04405\/} (2015).

\bibitem{scalability_cost}
{\sc McSherry, F., Isard, M., and Murray, D.~G.}
\newblock Scalability! but at what cost.
\newblock In {\em Proceedings of the USENIX Workshop on Hot Topics in Operating
  Systems\/} (2015).

\bibitem{hamed_2015}
{\sc Mejova, Y., Haddadi, H., Noulas, A., and Weber, I.}
\newblock {\#F}oodporn: Obesity patterns in culinary interactions.
\newblock In {\em Proceedings of the ACM conference on Digital Health 2015\/}
  (2015).

\bibitem{aparicio2014parallel}
{\sc Oliveira~Aparicio, D., Pinto~Ribeiro, P.~M., and Da~Silva, F. M.~A.}
\newblock Parallel subgraph counting for multicore architectures.
\newblock In {\em Proceedings of the IEEE International Symposium on Parallel
  and Distributed Processing with Applications\/} (2014).

\bibitem{graphlets}
{\sc Pr{\v{z}}ulj, N.}
\newblock Biological network comparison using graphlet degree distribution.
\newblock {\em Bioinformatics 23}, 2 (2007).

\bibitem{gtries}
{\sc Ribeiro, P., and Silva, F.}
\newblock G-{T}ries: A data structure for storing and finding subgraphs.
\newblock {\em Data Mining and Knowledge Discovery 28}, 2 (2014).

\bibitem{schmidt2009scalable}
{\sc Schmidt, M.~C., Samatova, N.~F., Thomas, K., and Park, B.-H.}
\newblock A scalable, parallel algorithm for maximal clique enumeration.
\newblock {\em Journal of Parallel and Distributed Computing 69}, 4 (2009).

\bibitem{shao2014parallel}
{\sc Shao, Y., Cui, B., Chen, L., Ma, L., Yao, J., and Xu, N.}
\newblock Parallel subgraph listing in a large-scale graph.
\newblock In {\em Proceedings of the ACM SIGMOD International Conference on
  Management of Data\/} (2014).

\bibitem{slota2014complex}
{\sc Slota, G.~M., and Madduri, K.}
\newblock Complex network analysis using parallel approximate motif counting.
\newblock In {\em Proceedings of the IEEE International Parallel and
  Distributed Processing Symposium\/} (2014).

\bibitem{giraph++}
{\sc Tian, Y., Balmin, A., Corsten, S.~A., Tatikonda, S., and McPherson, J.}
\newblock From ``think like a vertex" to ``think like a graph".
\newblock {\em Proceedings of the VLDB Endowment 7}, 3 (2013).

\bibitem{mace}
{\sc Uno, T.}
\newblock {MACE}: Maximal clique enumerator, ver 2.0.
\newblock \url{http://research.nii.ac.jp/~uno/code/mace.html}.

\bibitem{bsp}
{\sc Valiant, L.~G.}
\newblock A bridging model for parallel computation.
\newblock {\em Communications of the ACM 33}, 8 (1990).

\bibitem{wang2004parallel}
{\sc Wang, C., and Parthasarathy, S.}
\newblock Parallel algorithms for mining frequent structural motifs in
  scientific data.
\newblock In {\em Proceedings of the Annual International Conference on
  Supercomputing\/} (2004).

\bibitem{wu2009distributed}
{\sc Wu, B., Yang, S., Zhao, H., and Wang, B.}
\newblock A distributed algorithm to enumerate all maximal cliques in
  {M}ap{R}educe.
\newblock In {\em Proceedings of the International Conference on Frontier of
  Computer Science and Technology\/} (2009).

\bibitem{ashraf_2013}
{\sc Xiang, J., Guo, C., and Aboulnaga, A.}
\newblock Scalable maximum clique computation using {M}ap{R}educe.
\newblock In {\em Proceedings of the IEEE International Coference on Data
  Engineering\/} (2013).

\bibitem{blogel}
{\sc Yan, D., Cheng, J., Lu, Y., and Ng, W.}
\newblock Blogel: A block-centric framework for distributed computation on
  real-world graphs.
\newblock {\em Proceedings of the VLDB Endowment 7}, 14 (2014).

\bibitem{gspan}
{\sc Yan, X., and Han, J.}
\newblock g{S}pan: Graph-based substructure pattern mining.
\newblock In {\em Proceedings of the IEEE International Conference on Data
  Mining\/} (2002).

\bibitem{zhao2012sahad}
{\sc Zhao, Z., Wang, G., Butt, A.~R., Khan, M., Kumar, V.~A., and Marathe,
  M.~V.}
\newblock Sahad: Subgraph analysis in massive networks using {H}adoop.
\newblock In {\em Proceedings of the IEEE International Parallel and
  Distributed Processing Symposium\/} (2012).

\end{thebibliography}

\newpage

%!TEX root = main.tex

\appendix
\section*{Appendix}
This section contains the proofs of the claims of Section~\ref{sec:system}.
Formally, our definition of canonicality is the following. 
We focus on the case of vertex-based exploration since the edge-based case is analogous.

\begin{definition} {\bf Canonical embedding.} Let $e = \langle v_1, \ldots, v_n \rangle$ be a canonical embedding $e$ induced by vertices.
We say that $e$ is canonical if the following three properties hold:
\begin{enumerate}
\item[P1:] $\forall i > 1$ it holds $v_1 < v_i$
\item[P2:] $\forall i > 1$ it holds $\exists j < i:(v_j,v_i) \in E$
\item[P3:] $\forall v_h, v_k, v_j$ such that $(v_h, v_j) \in E$ and $h < k < j$ and $\not\exists w < h : (v_w, v_j) \in E$, it holds $v_k < v_ j$.
\end{enumerate}
\label{def:canonical-e2}
\end{definition}

We now first show that Algorithm~\ref{alg:check-canonical} verifies Definition~\ref{def:canonical-e2} and then that Definition~\ref{def:canonical-e2} satisfies Uniqueness and Extendibility.

\begin{theorem}
Algorithm~\ref{alg:check-canonical} returns true if and only if Definition~\ref{def:canonical-e2} holds.
\label{thm:algo}
\end{theorem}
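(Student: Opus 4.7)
The plan is to reduce the biconditional to verifying properties P1--P3 only for the newly added vertex $v$ at position $n+1$. Because the parent embedding $\langle v_1, \ldots, v_n \rangle$ is assumed canonical, P1, P2, and P3 already hold for every index $i \le n$; appending a new vertex at position $n+1$ cannot invalidate any of these three properties, since each property at index $i$ refers only to vertices at positions $\le i$. Moreover, because $v$ is generated as an extension in Algorithm~\ref{alg:exploration}, it is by construction a neighbor of at least one vertex of the parent embedding, so P2 automatically holds at position $n+1$. It therefore suffices to show that Algorithm~\ref{alg:check-canonical} returns true if and only if P1 and P3 both hold at position $n+1$.

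For the ``if algorithm returns true then canonical'' direction, I would argue by contrapositive on the two early-exit conditions of the algorithm. The first exit $v_1 > v$ is precisely the negation of P1 at $i = n+1$. The second exit fires only after some earliest index $i^\star$ at which $v_{i^\star}$ is a neighbor of $v$, and further finds an index $k > i^\star$ with $v_k > v$. Setting $v_h := v_{i^\star}$, $v_j := v$, and using this $v_k$ yields a triple violating P3 at $j = n+1$; the side condition $\not\exists w < h : (v_w, v_j) \in E$ is guaranteed because $v_{i^\star}$ is the first neighbor of $v$ encountered by the scan. If neither exit fires, both P1 and P3 at position $n+1$ hold by running the same chain of implications in reverse.

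For the converse, I would show that any failure of P1 or P3 at position $n+1$ must trigger one of the algorithm's early exits: a violation of P1 immediately fires the first check, while a violation of P3 at $n+1$ provides indices $h < k$ with $v_h$ a neighbor of $v$, no earlier neighbor of $v$, and $v_k > v$, which causes \emph{foundNeighbour} to be set at iteration $h$ and the ElseIf to fire at iteration $k$. The main subtlety I anticipate is matching the ``first neighbor'' clause in P3 with the one-pass semantics of the \emph{foundNeighbour} flag; I would handle this by observing that the flag flips exactly once, at the smallest index carrying a neighbor of $v$, which is precisely the index singled out by the non-existence clause in P3. A parallel argument for edge-induced exploration, with the symbol ``vertex'' replaced throughout by ``edge'' and incidence replaced by shared endpoint, completes the theorem.
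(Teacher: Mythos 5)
Your proposal is correct and follows essentially the same route as the paper's proof: both exploit the incrementality of the check, use the canonicality of the parent embedding as the inductive hypothesis, observe that P2 holds by construction of the extension, and reduce the remaining work to verifying P1 and P3 at the newly appended position against the two early exits of the algorithm. Your treatment is in fact slightly more careful than the paper's on two points — explicitly noting that P1 must also be re-checked for the new vertex (the $v_1 > v$ exit), and explicitly matching the \emph{foundNeighbour} flag to the ``no earlier neighbor'' clause of P3 — but these are refinements of the same argument, not a different one.
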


\begin{proof}
The proof is by induction on the length of the embedding $e$.
If $e$ consists of only one vertex, it is trivial to observe that Algorithm~\ref{alg:check-canonical} implements Property P1 of Definition~\ref{def:canonical-e2} by checking if $v_1 > v$.
For the inductive step, let $e = \langle v_1 \ldots v_n, v \rangle$ with $n \geq 1$.
By construction, Algorithm~\ref{alg:check-canonical} has already returned true on $e' = \langle v_1 \ldots v_n \rangle$,
so we know that $e'$ satisfies Definition~\ref{def:canonical-e2} by inductive hypothesis.
In order to verify Definition~\ref{def:canonical-e2} on $e$ it is thus sufficient to check that the Properties P2 and P3 hold for $v_j = v$.
The algorithm scans $e$ until it finds a neighbor of $v$, which corresponds to the $v_h$ of Property P3, and then returns false if and only if some $v_k > v$ is found with $i = k > h$.
Property P2 holds because $v$ is an extension of the parent embedding by construction.
\end{proof}

\begin{theorem}
Definition~\ref{def:canonical-e2} satisfies Extendibility.
\end{theorem}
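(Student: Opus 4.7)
The plan is to prove Extendibility constructively: given a canonical embedding $e_c$ of size $n+1$, I will exhibit a canonical parent embedding of size $n$ whose neighbor-extension produces $e_c$. The natural candidate is the prefix obtained by deleting the final vertex of $e_c$. This reduces the theorem to a closure claim: Definition~\ref{def:canonical-e2} is preserved under taking the length-$n$ prefix of a canonical sequence of length $n+1$.

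Write $e_c = \langle u_1, \ldots, u_{n+1} \rangle$ and set $e'_c := \langle u_1, \ldots, u_n \rangle$. I would first verify that $e'_c$ is a bona fide (connected) embedding. This follows directly from P2 applied to $e_c$: every $u_i$ with $1 < i \leq n$ has an earlier neighbor $u_j$ with $j < i$, so iterating this back yields a path in $e'_c$ from every $u_i$ to $u_1$, hence the induced subgraph on $\{u_1, \ldots, u_n\}$ is connected.

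Next I would check that $e'_c$ satisfies P1, P2, and P3 of Definition~\ref{def:canonical-e2}. Each property is universally quantified, and for $e'_c$ the quantifier ranges over a subset (indices $\leq n$) of the corresponding quantifier for $e_c$ (indices $\leq n+1$). P1 and P2 therefore transfer immediately by restriction of the index range. For P3, the hypothesis ``$\not\exists w < h : (v_w, v_j) \in E$'' depends only on indices $w < h \leq j \leq n$, which are entirely unaffected by dropping $u_{n+1}$, and the conclusion $v_k < v_j$ also involves only $k, j \leq n$; hence P3 for $e'_c$ is a consequence of P3 for $e_c$ restricted to $j \leq n$.

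Finally, I would show that $e_c$ is actually produced as one of the extensions of $e'_c$ in Algorithm~\ref{alg:exploration}. Applying P2 of $e_c$ with $i = n+1$ yields some index $j \leq n$ with $(u_j, u_{n+1}) \in E$, i.e., $u_{n+1}$ is a neighbor of a vertex in $e'_c$. Appending $u_{n+1}$ to $e'_c$ is therefore a valid neighbor-extension step and reproduces $e_c$ exactly; since $e'_c$ is canonical it is not pruned, so $e_c$ is enumerated from a canonical parent, as Extendibility demands. I anticipate no deep obstacle in this proof: the one place requiring a moment's care is the bookkeeping for P3 — checking that the ``earliest neighbor'' witness $v_h$ selected for a given $v_j$ with $j \leq n$ is the same vertex in $e'_c$ as in $e_c$, which holds because the candidate index set $\{w < h\}$ is identical under truncation.
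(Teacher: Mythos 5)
Your proof is correct, and it takes a more self-contained route than the paper. The paper disposes of this theorem in one line, by appeal to the ``incrementality'' of Algorithm~\ref{alg:check-canonical} together with Theorem~\ref{thm:algo}: since the canonicality check only examines the newly appended vertex against an already-verified canonical parent, and since (by Theorem~\ref{thm:algo}) the check agrees with Definition~\ref{def:canonical-e2}, the definition must be closed under taking prefixes. You instead prove prefix-closure directly from P1--P3: each property is a universally quantified statement whose instances for the truncated sequence form a subset of the instances for the full sequence (with the only delicate point being that the witness clause $\not\exists w < h : (v_w,v_j)\in E$ in P3 refers only to indices below $j\le n$ and is therefore unaffected by dropping $u_{n+1}$), and P2 at $i=n+1$ supplies both the connectivity of the prefix and the fact that $u_{n+1}$ is a legal neighbor-extension of it. Your reading of Extendibility as ``the canonical automorphism is reachable as a one-vertex extension of a canonical parent, namely its own length-$n$ prefix'' is the right one --- it is exactly what the completeness proof in the Appendix consumes, whereas a literal reading of the prose definition (that $e_c$ extends the \emph{particular} parent $e'$ from which the non-canonical candidate $e$ was generated) would be false. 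Your version costs a few more lines but does not lean on the algorithm-equivalence theorem, and it makes explicit the quantifier-restriction argument that the paper leaves implicit; the paper's version is shorter but borrows its force entirely from Theorem~\ref{thm:algo}, whose inductive setup already presupposes the canonical parent.
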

\begin{proof}
Follows directly from the incrementality of Algorithm~\ref{alg:check-canonical} and from Theorem~\ref{thm:algo}.
\end{proof}

\begin{theorem}
Definition~1 satisfies Uniqueness
%Given an embedding $e$, there is one and only one canonical embedding (def. \ref{def:canonical-e}) $e_c$ automorphic to $e$.
\end{theorem}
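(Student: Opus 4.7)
The plan is to show that the set $S_e$ of embeddings automorphic to $e$ contains exactly one permutation satisfying Definition~1. All members of $S_e$ correspond to the same underlying subgraph: they share a common vertex set $V$ and induced edge set $E_V$, differing only in the order of vertex listing. I will therefore treat canonicality as a statement about orderings of the pair $(V, E_V)$ and prove existence and uniqueness separately.

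For existence, I would verify that the greedy traversal described in Section~\ref{sec:system} (start from $\min V$; at each subsequent step add the smallest element of $V$ not yet visited that is adjacent to some visited vertex) produces an ordering satisfying P1, P2, and P3. Properties P1 and P2 are immediate from the construction. For P3, if $v_h$ is the first predecessor adjacent to $v_j$ in the produced ordering, then $v_j$ becomes an eligible candidate from step $h+1$ onwards; at each step $h+1, \ldots, j-1$ some $v_k$ was chosen over $v_j$ by the greedy rule, so $v_k < v_j$.

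For uniqueness, the plan is strong induction on the position index. Suppose $e = \langle v_1, \ldots, v_n \rangle$ and $e' = \langle v'_1, \ldots, v'_n \rangle$ are both canonical permutations of $V$; I claim $v_i = v'_i$ for all $i$. The base case $i=1$ follows from P1. For the inductive step, assume $v_j = v'_j$ for all $j < i$ and suppose toward contradiction $v_i \neq v'_i$, say $v_i < v'_i$ (the reverse case is symmetric by swapping the roles of $e$ and $e'$). Since both sequences are permutations of the same set and agree on the first $i-1$ positions, $v_i$ must appear in $e'$ at some position $m > i$. Property P2 applied to $v_i$ in $e$ produces an index $h < i$ with $(v_h, v_i) \in E_V$; the inductive hypothesis gives $v_h = v'_h$, so $v'_m = v_i$ has a predecessor in $e'$ at index $h$. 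Letting $h^\ast \leq h$ be the smallest such index in $e'$, the triple $(h^\ast, i, m)$ satisfies $h^\ast < i < m$, so applying P3 in $e'$ yields $v'_i < v'_m = v_i$, which contradicts $v_i < v'_i$.

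The main obstacle I anticipate is technical rather than conceptual: ensuring that the first-predecessor index $h^\ast$ in $e'$ is well defined and strictly less than $i$, which is precisely where the inductive hypothesis is used to transfer the adjacency witness $v_h$ from $e$ to $e'$. Once this is established the symmetric case follows immediately, closing the induction and showing that every automorphism class contains exactly one embedding satisfying Definition~1.
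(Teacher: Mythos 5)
Your proposal is correct and follows essentially the same route as the paper: existence via the greedy smallest-ID traversal, and uniqueness by locating the first position of disagreement, using P2 to produce an adjacency witness in the common prefix, and applying P3 to derive a contradiction. If anything, your argument is slightly more careful than the paper's, since you explicitly pass to the minimal predecessor index $h^\ast$ so that the hypothesis $\not\exists w < h^\ast : (v_w, v_j) \in E$ of P3 is actually satisfied, a point the paper's proof glosses over.
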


\begin{proof}
%Without loss of generality, 
%consider an embedding $e = \langle v_1, \ldots, v_n \rangle$ 
%induced by vertices.

We first show that there exists a canonical embedding for each embedding $e$.
% and $e$ is a complete subgraph (i.e., there is an edge between any pair of vertices in $e$).
%Then, we have $n!$ different embeddings automorphics to $e$. 
Let $e_c$ be the canonical embedding automorphic to $e$. 
We can construct $e_c = \langle v_1^c, \ldots, v_n^c \rangle$ as follow: 
$v_1^c$ will be the vertex of $e$ with the smallest ID overall. Moreover, $v_k^c$ will be 
the vertex of $e$ with the smallest ID such that 
$v_k^c$ is a neighbor of some vertex in $e_c^{k-1} = \langle v_1^c, \ldots, v_{k-1}^c \rangle$ and $v_k^c \notin e_c^{k-1}$.
By construction, $e_c$ follows Definition~\ref{def:canonical-e2}. Now, we will prove that $e_c$ is unique.

We now show that there exist no two canonical embeddings for the embedding $e$.
The proof is by contradiction.
Suppose that there exist two canonical embedding for $e$, $a = \langle v_1^a, \ldots, v_n^a \rangle$ and $b = \langle v_1^b, \ldots, v_n^b \rangle$.
By Property P1 of Definition~\ref{def:canonical-e2}, both embeddings must start with the vertex with smallest ID, otherwise we are done.
Let $k > 1$ be the position of the first disagreement between $a$ and $b$ such that $v_k^a \neq v_k^b$. 
Assume w.l.o.g. that $v_k^a < v_k^b$.
By Property P2 of Definition~\ref{def:canonical-e2}, there exists a neighbor of $v_k^a$ in $a$ with index $h < k$.
Since $k$ is the first divergence between $a$ and $b$, it holds $v_h^a = v_h^b$.
In addition, since $a$ and $b$ have the same set of vertices, $v_k^a$ occurs in $b$ at a position $j > k$,
so $v_j^b = v_k^a$.
By Property P3 of Definition~\ref{def:canonical-e2} it must hold that $v_k^b < v_j^b$, since $v_h^b = v_h^a$ is a neighbor of $v_j^b = v_k^a$. 
However, it holds $v_k^b > v_k^a = v_j^b$, a contradiction.
\end{proof}

We now argue that \name's distributed exploration strategy 
satisfies the completeness property of embedding exploration (see
Section~\ref{sec:model}).
The argument relies on the anti-monotonic and
automorphism invariance properties of the user-defined functions.

\begin{theorem} For each embedding $e$ such that
    $\phi(e) \wedge \alpha(e) = \textit{true}$,
%    there exists a canonical 
%    embedding $e_c$ of $e$ such that $e_c \in I$ in some exploration step
    it holds that Algorithm~\ref{alg:exploration} adds $\pi(e)$ and $\beta(e)$ to $O$.
    \end{theorem}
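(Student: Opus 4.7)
My plan is to prove the statement by induction on the size (number of vertices in vertex-based exploration, or edges in edge-based) of the embedding $e$, working with the canonical automorphism of $e$ rather than $e$ itself. Let $e_c$ denote the canonical automorphism of $e$ guaranteed by the Uniqueness property. By automorphism invariance, $\phi(e_c) = \phi(e) = \textit{true}$, $\alpha(e_c) = \alpha(e) = \textit{true}$, $\pi(e_c) = \pi(e)$, and $\beta(e_c) = \beta(e)$. Hence it suffices to show that Algorithm~\ref{alg:exploration} adds $\pi(e_c)$ and $\beta(e_c)$ to $O$, since these are literally the same values.

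For the base case, take $e$ of size $1$. In the initial step, $I$ contains only the special ``undefined'' embedding, and $C$ consists of all single-vertex (or single-edge) embeddings of $G$, among which $e_c$ necessarily appears. Every size-$1$ embedding is trivially canonical, so the canonicality check passes. Since $\phi(e_c) = \textit{true}$, the algorithm adds $\pi(e_c)$ to $O$ and inserts $e_c$ into $F$. This embedding then appears in $I$ at the next step, and because $\alpha(e_c) = \textit{true}$, the algorithm adds $\beta(e_c)$ to $O$, as required.

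For the inductive step, suppose the claim holds for all canonical embeddings of size less than $n$, and let $e_c$ have size $n$. By Extendibility, $e_c$ is obtained by extending some canonical parent embedding $e_c'$ of size $n-1$ by a single vertex (or edge). By anti-monotonicity of $\phi$ and $\alpha$, together with the fact that $\phi(e_c) = \alpha(e_c) = \textit{true}$, we get $\phi(e_c') = \alpha(e_c') = \textit{true}$. The inductive hypothesis then guarantees that $e_c'$ was placed into $F$ at step $n-1$ and therefore appears in $I$ at step $n$. When the algorithm iterates over $I$ at step $n$, it processes $e_c'$, which first adds $\beta(e_c')$ to $O$ (the prior step's obligation) and then generates all its extensions into $C$; among those is $e_c$. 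The canonicality check on $e_c$ succeeds by construction, the filter check $\phi(e_c)$ returns true, and Uniqueness precludes any automorphic twin from already residing in $F$. Thus $\pi(e_c)$ is added to $O$ and $e_c$ is placed into $F$, making it available at step $n+1$, where $\alpha(e_c) = \textit{true}$ causes $\beta(e_c)$ to be added to $O$.

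The main obstacle is the bookkeeping around the fact that $e$ itself may never appear in the exploration at all; only its canonical automorphism $e_c$ does, and the two are visited in adjacent supersteps (one for $\pi$, the next for $\beta$). Reconciling this requires a careful appeal to Uniqueness (to know that $e_c$ exists and is unique, so automorphism pruning is safe), to Extendibility (to walk the induction along canonical ancestors rather than arbitrary ones), to anti-monotonicity (to propagate the ``$\textit{true}$'' verdicts from $e_c$ back to $e_c'$ so the inductive hypothesis applies), and to automorphism invariance (to identify the outputs $\pi(e_c), \beta(e_c)$ produced by the algorithm with the required $\pi(e), \beta(e)$). Once these four properties are lined up, the induction goes through essentially mechanically.
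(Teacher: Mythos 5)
Your proof is correct and follows the same overall strategy as the paper's: track the canonical automorphism $e_c$ of $e$ through the exploration steps by induction on embedding size, using Uniqueness, automorphism invariance, anti-monotonicity, and the structure of the canonicality check. The only real difference is the direction of the induction. The paper builds the chain forward: it starts from an arbitrary vertex of $e_c$, argues by connectivity that some candidate at each step covers one more vertex of $e_c$, invokes Extendibility in its literal form (the canonical automorphism of a child of a canonical parent is itself a child of that parent) to keep the chain canonical, and only identifies the final element with $e_c$ via Uniqueness. You instead walk backward along the prefixes of $e_c$ itself, which is arguably tidier. Two small caveats. First, the fact you need in the inductive step --- that the length-$(n-1)$ prefix of a canonical embedding is itself canonical and that $e_c$ is literally among the extensions generated from it --- is not quite what the paper's Extendibility property states; it is instead the soundness of the incremental check (each of properties P1--P3 in Definition~\ref{def:canonical-e2} restricts to prefixes, and P2 guarantees the new vertex is a neighbor of the prefix), which does hold but deserves its own one-line justification rather than being attributed to Extendibility. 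Second, your induction hypothesis is implicitly strengthened to ``$e_c'$ is placed in $F$ at step $n-1$,'' which is stronger than the theorem statement itself; you use this but do not announce it as the quantity being inducted on (the paper's proof carries the same implicit strengthening). Neither point is a gap in substance.
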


\begin{proof} We consider vertex-based exploration, but the edge-based
    exploration case is analogous.
    We know by uniqueness that
    there exists a canonical automorphism $e_c$ of $e$. By the automorphism invariance property of 
    filters,
    we have that the filter functions evaluate true for $e_c$ as they do for $e$. 
    We now show that \name\ obtains
    $e_c$ by exploring a sequence of canonical
    embeddings $e_1 \ldots e_n$ with $e_n = e_c$ such that $e_i$
    is a candidate embedding at exploration step $i$ and
    consists of $i$ vertices. At the first exploration step, $C$ contains all vertices,
    so a one-vertex embedding $e_1$ consisting of one of the vertices in $e_c$ is in $C$. 
    All filter functions evaluate true for $e_1$ because of anti-monotonicity, 
    since they evaluate true for $e_c = e_n$.
    In addition, all one-vertex embeddings are canonical.
    Algorithm~\ref{alg:exploration} thus adds $e_1$ to $F$ and expands it 
    in the following exploration step, where it
    obtains a new candidate set $C$ containing all extensions of
    $e_1$ with two vertices. Since all embeddings we consider are connected, 
    some $e \in C$ includes two vertices of $e_c$.
    By extendibility, the canonical automorphism $e_2$ of $e$ is also included in $C$.
    Again, all filter functions evaluate true for $e_2$ because of anti-monotonicity, 
    since they evaluate true for $e_c = e_n$.
    Because of this and of the canonicality of $e_2$, Algorithm~\ref{alg:exploration}
    adds $e_2$ to $F$ and expands it in the next step.
    The same argument can be
    repeated inductively to show that Algorithm~\ref{alg:exploration} 
    includes $e_n = e_c$ in $C$, adds $\pi(e_c)$ to $O$, and adds $e_c$ to $F$ at exploration step $n$,
    and also that it adds $\beta(e_c)$ to $O$ at step $n+1$. 
    By automorphism invariance, $\pi(e_c) = \pi(e)$ 
    and $\beta(e_c) = \beta(e)$. \end{proof}

\end{document}